\newcounter{thm}
\theoremstyle{plain}
\newtheorem{theorem}[thm]{Theorem}
\newtheorem{lemma}[thm]{Lemma}
\theoremstyle{definition}
\newtheorem{definition}{Definition}
\newcommand{\beq}{\begin{equation}}
\newcommand{\eeq}{\end{equation}}
\newcommand{\ket} [1] {\vert #1 \rangle}
\newcommand{\bra} [1] {\langle #1 \vert}
\newcommand{\ba}{\begin{align}}
\newcommand{\ea}{\end{align}}
\newcommand{\bea}{\begin{eqnarray}}
\newcommand{\eea}{\end{eqnarray}}
\newcommand{\D}{\mathcal{D}}
\newcommand{\Z}{\mathbb{Z}}
\newcommand{\abs}[1]{\left\lvert{#1}\right\rvert}
\newcommand{\avg}[1]{\left\langle {#1} \right\rangle}
\newcommand{\DA}{\Delta_{\mathcal{A}}}
 \definecolor{BLACK}{gray}{0}
 \definecolor{WHITE}{gray}{1}
 \definecolor{RED}{rgb}{1,0,0}
 \definecolor{GREEN}{rgb}{0,.6,0}
 \definecolor{BLUE}{rgb}{0,0,1}
 \definecolor{CYAN}{cmyk}{1,0,0,0}
 \definecolor{MAGENTA}{cmyk}{0,1,0,0}
 \definecolor{YELLOW}{cmyk}{0,0,1,0}
\def\id{I}
\def\1{\mat{\id}}
\def\mat#1{\vec{#1}}
\renewcommand{\vec}[1]{\bm{\mathrm{#1}}}
\renewcommand{\sout}[1]{}
\begin{document} 
\title{Flow Ambiguity: A Path Towards Classically Driven Blind Quantum Computation}
\author{Atul Mantri}
\thanks{These authors contributed equally to this work.}
\affiliation{Singapore University of Technology and Design, 8 Somapah Road, Singapore 487372}
\affiliation{Centre for Quantum Technologies, National University of Singapore, Block S15, 3 Science Drive 2, Singapore 117543}
\author{Tommaso F. Demarie}
\thanks{These authors contributed equally to this work.}
\affiliation{Singapore University of Technology and Design, 8 Somapah Road, Singapore 487372}
\affiliation{Centre for Quantum Technologies, National University of Singapore, Block S15, 3 Science Drive 2, Singapore 117543}
\author{Nicolas C. Menicucci}
\email[]{ncmenicucci@gmail.com}
\affiliation{Centre for Quantum Computation and Communication Technology, School of Science, RMIT University, Melbourne, Victoria 3001, Australia}
\affiliation{School of Physics, The University of Sydney, Sydney, New South Wales 2006, Australia}
\author{Joseph F. Fitzsimons}
\email[]{joseph_fitzsimons@sutd.edu.sg}
\affiliation{Singapore University of Technology and Design, 8 Somapah Road, Singapore 487372}
\affiliation{Centre for Quantum Technologies, National University of Singapore, Block S15, 3 Science Drive 2, Singapore 117543}
\begin{abstract}
Blind quantum computation protocols allow a user to delegate a computation to a remote quantum computer in such a way that the privacy of their computation is preserved, even from the device implementing the computation. To date, such protocols are only known for settings involving at least two quantum devices: either a user with some quantum capabilities and a remote quantum server or two or more entangled but noncommunicating servers. In this work, we take the first step towards the construction of a blind quantum computing protocol with a completely classical client and single quantum server. Specifically, we show how a classical client can exploit the ambiguity in the flow of information in measurement-based quantum computing to construct a protocol for hiding critical aspects of a computation delegated to a remote quantum computer. This ambiguity arises due to the fact that, for a fixed graph, there exist multiple choices of the input and output vertex sets that result in deterministic measurement patterns consistent with the same fixed total ordering of vertices. This allows a classical user, computing only measurement angles, to drive a measurement-based computation performed on a remote device while hiding critical aspects of the computation.
\end{abstract}
\pacs{}

\date{\today}
\maketitle

\section{Introduction}
\label{intro}

Large-scale quantum computers offer the promise of quite extreme computational advantages over conventional computing technologies for a range of problems spanning cryptanalysis~\cite{shor1999polynomial}, simulation of physical systems~\cite{lloyd1996universal}, and machine learning~\cite{lloyd2013quantum}. Recently, however, a new application has emerged for quantum computers: secure delegated computation \cite{Dunjko2014}.

Consider a user wishing to have a computation performed on a remote server. Two main security concerns arising for the user relate to the \emph{privacy} and the \emph{correctness} of the computation.  The privacy concern is that the description of their computation, both the program and any input data, remains hidden even from the server. The correctness concern is that a malicious server might tamper with their computation, sending them a misleading result: hence, ideally such behaviour would be detectable. Quantum protocols have been proposed that can mitigate both of these concerns. In the literature, protocols that allow for program and data privacy are known as \emph{blind} quantum computing protocols, while protocols that allow for correctness to be ensured with high probability are known as \emph{verifiable} quantum computing protocols~\footnote{Formal definitions of blindness and verifiability can be found in~\cite{Dunjko2014}.}.

The first blind quantum computing protocol was proposed by Childs~\cite{Childs2005b}. While functional, this scheme put a rather heavy burden on the client's side in terms of resources, with the client required to control a quantum memory and to perform SWAP gates. A subsequent protocol, from Arrighi and Salvail~\cite{Arrighi2006}, introduced mechanisms for both verification and blindness for a limited range of functions and can be seen as the start of an intimate link between blindness and verifiability. This link was further established with the discovery of the universal blind quantum computing (UBQC) protocol~\cite{Broadbent2009}, which allows a client, equipped only with the ability to produce single-qubit states, to delegate an arbitrary quantum computation to a universal quantum server while making it blind with \emph{unconditional security}.

This scheme has been modified and extended several times in the last few years, with works investigating robustness~\cite{Morimae2012, Sueki2013, Chien2013}, optimality~\cite{Mantri2013, Giovannetti2013, Perez2015}, and issues related to physical implementations~\cite{Morimae2012b,dunjko2012blind}. Importantly, the blind computation protocols have proven a powerful tool in the construction of verifiable quantum computing protocols, with a number of protocols emerging in recent years based on the UBQC protocol~\cite{fitzsimons2017unconditionally,Hajdusek2015, Gheorghiu2015} and on an alternative blind protocol from Morimae and Fujii~\cite{morimae2013blind} in which the client performs single-qubit measurements rather than state preparations~\cite{morimae2014verification,hayashi2015verifiable,hayashi2016self}. The relatively low overhead in such schemes has made it possible to implement both blind and verifiable quantum computing protocols in quantum optics~\cite{Barz2012,Barz2013,greganti2016demonstration}.

The question of verifiability, directly rather than as a consequence of blindness, has also attracted attention. This problem was first studied by Aharonov \emph{et~al.}~\cite{aharonov2010proceedings}, who considered the use of a constant-sized quantum computer to verify a larger device. Subsequent work by Broadbent~\cite{Broadbent2015} reduced the requirements on the prover to mirror those used in the UBQC protocol. An entirely distinct route to verification has also emerged, which considers a classical user but requires multiple entangled but noncommunicating servers~\cite{Reichardt2013,mckague2013interactive}. Surprisingly, perhaps, many of these schemes are also blind, though often this was not the aim of the paper. In fact, only a few examples of verifiable computing schemes exist that are not naturally blind~\cite{fitzsimons2015post,morimae2016post}, and it is tempting to conjecture a fundamental link between blindness and verifiability. 

The verification methods discussed above provide a very strong form of certification, amounting to interactive proofs for correctness which do not rely on any assumptions about the functioning of the device to be tested. From an experimental point of view, the first nonclassically simulable evolution of quantum systems will most likely be implemented by means of nonuniversal quantum simulators rather than fully universal quantum computers. Here, too, the problem arises of certifying the correct functioning of a device~\cite{Hauke2012} that cannot be efficiently simulated. However, in this regime interactive proofs have proven more difficult to construct. Nonetheless, progress has been made in developing a range of certification techniques for various physical systems. These include feasible quantum state tomography of matrix product states~\cite{Cramer2010}, certification of the experimental preparation of resources for photonic quantum technologies~\cite{Aolita2015}, certification of simulators of frustration-free Hamiltonians~\cite{Hangleiter2017}, and derivation of a statistical benchmark for boson sampling experiments~\cite{Walschaers2016}.

A common feature among all blind quantum computing protocols and interactive proofs of correctness for quantum computation is that they require that at least two parties possess quantum capabilities. Removing this requirement and allowing a purely classical user to interact with a single quantum server would greatly expand the practicality of delegated quantum computation since it would remove large-scale quantum networks as a prerequisite for verifiability. In the present work, we focus specifically on the question of blind computation with a completely classical client, but given the historic links between progress in blindness and verification, it is natural to expect that progress in either direction will likely be reflected in the other.

While it is presently unknown if such a protocol can exist, a negative result in this context is a  \emph{scheme-dependent} impossibility proof presented in Ref.~\cite{Morimae2014}. There, the author considered a scenario where a classical user and a quantum server exchange classical information in a two-step process. First, the classical client encodes their description of the computation using an affine encryption scheme and then sends all the classical encrypted data to the server. The server then performs a quantum computation using the received data and returns the classical output to the client who decrypts the result using their encryption key. For this setting, it was shown that secure blind quantum computing cannot be achieved unless $\text{BPP} = \text{BQP}$ (i.e.,~unless a classical computer can efficiently simulate a quantum computer). While this is an interesting result, it imposes strong assumptions on the operational method of blind quantum computation with a classical client and therefore does not seem to limit further studies in this direction. Additionally, Aaronson \textit{et}~al.~\cite{Aaronson2017} have recently suggested that information-theoretically blind quantum computing with a classical client is not likely to be possible because the existence of such a scheme implies unlikely containments between complexity classes. Additional implications that the development of a classical-client blind-computation protocol would have in complexity theory are discussed in Ref.~\cite{Dunjko2016}.

Here, we provide evidence in the opposite direction. We introduce a form of delegated quantum computation using measurement-based quantum computing (MBQC) as the underlying framework. This allows us to introduce a model-specific protocol that achieves a satisfying degree of security by directly exploiting the structure of MBQC. We show that the classical communication received by the party performing quantum operations is insufficient to reconstruct a description of the computation. This insufficiency remains even when the server is required only to identify the computation up to pre- and post-processing by polynomial-sized classical computation, under plausible complexity-theoretic assumptions. We call our scheme \emph{classically driven blind quantum computing (CDBQC)}.

The paper is structured as follows: In Sec.~\ref{chap2}, to help the readers unfamiliar with MBQC, we present a short introduction to this model of quantum computation. In Sec.~\ref{chap3} we describe the steps of the CDBQC protocol. In Sec.~\ref{chap4} we use mutual information to analyse the degree of blindness for a single round of the CDBQC protocol. In Sec.~\ref{chap5} we introduce the concept of flow ambiguity and we show how this is used by the client to hide information from the quantum server. Our conclusions are presented in Sec.~\ref{chap6}.

\section{Measurement Based Quantum Computation} 
\label{chap2}
In MBQC, a computation is performed by means of single-qubit projective measurements that drive the quantum information across a highly entangled resource state. The most general resources for MBQC are graph states~\cite{Raussendorf2001}.  A graph state is defined by a simple and undirected graph, i.e.,~a mathematical object $\mathcal{G} = (\mathcal{V},\mathcal{E})$ composed of a vertex set~$\mathcal{V}$ and an edge set~$\mathcal{E}$, with cardinality $\abs{\mathcal{V}}$ and $\abs{\mathcal{E}}$, respectively. The vertices of the graph represent the qubits, while their interactions are symbolized by the edges. A graph state $\ket{\mathcal{G}}$ is an $N$-qubit state, where $N = |\mathcal{V}|$. Each qubit is initialized in the state $\ket{+} = \frac{1}{\sqrt{2}} (\ket{0} + \ket{1})$ and then entangled with its neighbors by controlled-$Z$ gates, $\hat{C}_{Z\,{i,j}} = \ket{0}\bra{0}_i \otimes \hat{I}_j + \ket{1}\bra{1}_i \otimes \hat{Z}_j$, where $\hat{I}$ and $\hat{Z}$ are the single-qubit identity and Pauli-$Z$ gate respectively.  Explicitly, $\ket{\mathcal{G}} = \prod_{(i,j) \in \mathcal{E}} \hat{C}_{Z\,{i,j}} \ket{+}^{\otimes N}$. Equivalently, a graph state $\ket{\mathcal{G}}$ can be defined by the stabilizer relations $\hat{K}_v \ket{\mathcal{G}} = \ket{\mathcal{G}}$, with stabilizers~\cite{Raussendorf2001}
\begin{equation}
\hat{K}_v = \hat{X}_v \prod_{w \in \mathcal{N}(v)} \hat{Z}_w,\qquad \forall v \in \mathcal{V} ,
\end{equation}
where $\mathcal{N}(v)$ denotes the neighborhood of $v$ in $\mathcal{G}$. Without loss of generality, the vertices in $\mathcal{G}$ can be labeled $(1,\dotsc,N)$ in the order that the corresponding qubits are to be measured. We take this ordering to be implicit in the definition of the graph, for example as the order in which the vertices appear in the adjacency matrix for $\mathcal{G}$. It is also useful to define a specific type of graph state that will be used later. An $N$-qubit {\em cluster state} $\ket{\text{CS}}_{n,m}$ is the graph state corresponding to an $n \times m$ regular square-lattice graph $\mathcal{G}_{n,m}$. For such a graph, $N=nm$.

In the MBQC framework, given a resource state with graph $\mathcal{G}$, the standard procedure to perform a computation is to first identify two sets of qubits $\{I,O\}$ on $\mathcal{G}$. This procedure defines an \emph{open graph} $\mathcal{G}(I,O)$, such that $I,O \subseteq \mathcal{V}$ for a given $\mathcal{G}$. The set $I$ corresponds to the input set, while $O$ denotes the output set. In general, $0< \abs{I} \le \abs{O} \le \abs{\mathcal{V}}$. Note that the input and output sets can overlap. The complement of $I$ is written $I^c$, and similarly, the complement of $O$ is $O^c$. We also denote by $P(I^c)$ the power set of all the subsets of elements in $I^c$, and we define
\begin{align}
	\text{Odd}(K) \coloneqq \{ i : \abs{\mathcal{N}(i) \cap K } = 1 \mod 2 \}
\end{align}
as the odd neighborhood of a set of vertices $K \subseteq \mathcal{V}$. In this work, we are only interested in MBQC protocols that implement unitary embeddings. Hence for us, $\abs{I} = \abs{O} \le N$. Intuitively, the state of the qubits in the input set corresponds to the input state of a computation. Similarly, the qubits in the output set will contain the quantum information corresponding to the result of the computation once all the qubits in $O^c$ have been measured. In the process, the quantum information is transformed by the same principle that governs the generalized one-bit teleportation scheme~\cite{Gottesman1999,Zhou2000}.

For our purposes, we restrict the measurements to be projective measurements in the XY-plane of the Bloch sphere, denoted $M^{\alpha_j}_j = \{ \ket{\pm_\alpha} \bra{\pm_\alpha}_j \}$ for qubit~$j$, where $\ket{\pm_\alpha} = \frac{1}{\sqrt{2}} (\ket{0} \pm e^{i \alpha} \ket{1})$. As a convention, we use $b_j=0$ for the measured qubit collapsing to $\ket{+_\alpha}_j$ and $b_j=1$ for collapsing to $\ket{-_\alpha}_j$. The computation to be performed is specified both by the choice of open graph $\mathcal{G}(I,O)$ and by a vector $\vec \alpha$ specifying the measurement basis $\alpha_i$ for each qubit $i$. Note that these are the measurements \emph{that would be made directly on the cluster state if all the measurement outcomes were zero for non-output qubits}---i.e.,~if one were to implement the positive branch of the MBQC computation. Importantly, by convention, the positive branch corresponds to the target computation. In general, however, these bases need to be updated based on outcomes of earlier measurements in order to ensure the correct computation is performed. The description of the resource state, the order of measurements, and the dependency of the measurement bases on previous measurement outcomes are collectively known as a \textit{measurement pattern}.

Projective measurements are inherently random in quantum mechanics, and one needs a procedure to correct for this randomness. We show that this need for adaptation of future measurements based on previous outcomes is what prevents Bob from knowing the protocol perfectly. Not incidentally, it is also what circumvents the no-go result from Ref.~\cite{Morimae2014}. This is because only Alice knows how she is choosing to adapt future measurement bases dependent on previous measurement outcomes: Our observation is that different choices of adaptation strategy correspond to different computations in general.

The structure that determines how to recover deterministic evolution from a MBQC measurement pattern is called g-flow~\cite{Browne2007}, from \emph{generalized quantum-information flow}. Rigorously, given some resource state~$\ket{\mathcal{G}}$ and a measurement pattern on it, if the associated open graph $\mathcal{G}(I,O)$ satisfies certain g-flow conditions (to be described later), then the pattern is runnable, and it is also uniformly, strongly, and stepwise deterministic. This means that each branch of the pattern can be made equal to the positive branch after each measurement by application of local corrections, independently of the measurement angles. We use \emph{deterministic} without ambiguity to indicate all these attributes. Note also that satisfying the g-flow conditions is a necessary and sufficient condition for determinism.

In practice, the g-flow assigns a set of local Pauli corrections to a subset of unmeasured qubits after a measurement. See Ref.~\cite{Markham2013} for the fine details regarding the practicalities of g-flow. For simplicity, in the definition of g-flow below adapted from Ref.~\cite{Browne2007}, we assume all qubits are measured in the XY-plane of the Bloch sphere. The idea behind g-flow is to determine whether one can find a correction operator (related to a correcting set on the graph) that, in the case of a nonzero measurement outcome, can bring back the quantum state onto the projection corresponding to the zero outcome. This is done by applying stabilizer operators on the state. The g-flow conditions determine whether the geometrical structure of an open graph allows for these corrections after each measurement.
\begin{definition}[G-flow]
\label{def:flow}
For an open graph $\mathcal{G}(I,O)$, there exists a \emph{g-flow} $(g,\succ)$ if one can define a function ${g: O^c \to P(I^c)}$ and a partial order $\succ$ on $\mathcal{V}$ such that $\forall i \in O^c$, all of the following conditions hold:
\vspace{1ex}
\\
(G1) if $j \in g(i)$ and $j\neq i$, then $j \succ i$;\\
(G2) if $j \nsucc i$ and $i \neq j$, then $j \notin \text{Odd}(g(i))$; and \\
(G3) $i \notin g(i)$ and $i \in \text{Odd}(g(i))$.
\end{definition}
The successor function $g(i)$ indicates what measurements will be affected by the outcome of the measurement of qubit $i$, while the partial order $\succ$ should be thought of as the causal order of measurements. The condition (G1) says that if a vertex $j$ is in the correcting set of the vertex $i$, then $j$ should be measured after the vertex $i$. In other words, a correction should happen after the assigned measurement. Condition (G2) makes sure that if the correcting set of a vertex $i$ is connected to a vertex $j$, and $j$ is measured before the vertex $i$, then the vertex $j$ should have an even number of connections with the correcting set of vertex $i$. Then vertex $j$ receives an even number of equal Pauli corrections, which is equivalent to receiving none: hence, no correction can affect earlier corrections. Finally, condition (G3) certifies that each vertex $i$ has an odd number of connections with its correcting set, such that a correction is indeed performed on $i$~\cite{Markham2013}. In this sense, the g-flow conditions are understood in terms of geometrical conditions on the open graph. 

Guided by these conditions, for cluster states, here and in the following, we always adopt the same choice of vertex labeling on the graph as shown in Fig.~\ref{fig:TOCS}. This choice is motivated by our later goal of counting how many choices of open graphs satisfy the g-flow conditions on a given cluster state. Since a vertex labeling corresponds to a total order of measurement, it is easy to check that, in order to satisfy the g-flow conditions, for any vertex~$i$ the quantum information can only move towards the right, move towards the bottom, or stay on that vertex. Furthermore, condition~(G3) imposes that the information from a vertex cannot move simultaneously towards the right and towards the bottom. In order to further simplify the process of counting flows, we introduce an additional criterion, which is not strictly required by g-flow:
\vspace{1em}
\\
(G4) If $k \in \mathcal{N}(i) \cup \mathcal{N}(j)$, and if $k \in g(i)$, then $k \notin g(j)$.
\vspace{1em}

\noindent For $\mathcal{G}_{n,m}$, as we shall see later, it will prove easier to count flows satisfying (G1)--(G4) than those satisfying (G1)--(G3). This process of course only provides a lower bound on the number of flows rather than the exact number, but will be sufficient for our purposes.

With these four criteria in place, we can define a g-flow graph path, in this restricted version of g-flow, as an ordered set of adjacent edges of the graph, starting from an element of the input set and ending on an element of the output set, such that for each edge $ij$ of the path, we have $j \in g(i)$, with $j \succ i$. Then it follows that (G4) does not allow the g-flow graph paths to cross. To help the understanding of MBQC, one could think of a g-flow graph path as a representation of a wire in the quantum circuit picture. This intuition will be used later in this work to count how many ways one could define an open graph with g-flow for our choice of total ordering, which in turn provides a link to the idea that different open graphs with g-flow lead to different quantum computations.

\section{Classically Driven Blind Quantum Computation}
\label{chap3}
\label{CDBQCsec}
\begin{algorithm}
\caption{$\textbf{CDBQC}(\mathcal{G},\mathcal{A})$: Classically Driven Blind Quantum Computation}
 \label{prot:CDBQCsh}
~\\
\textbf{Protocol parameters:}
\begin{itemize}
\item A graph $\mathcal{G}$ with an implicit total ordering of vertices.
\item A set of angles $\mathcal{A}$ satisfying Eq.~\eqref{eq:domainangles}.
\end{itemize}
\textbf{Alice's input:}
\begin{itemize}
\item A target computation $\DA$ implemented using MBQC as
\beq
\nonumber
\DA^M = \{ \mathcal{G} , \vec \alpha, \vec f \},
\eeq 
representing a measurement pattern on $\mathcal{G}$ compatible with the total ordering of measurements implicit in $\mathcal{G}$, which describes a unitary embedding $\hat{U}_{A}$. The set $\vec \alpha$ represents a sequence of $N$ measurement angles over the graph $\mathcal{G}$, with each angle chosen from a set $\mathcal{A}$, which is also taken to be a parameter of the protocol and is known to both parties. The g-flow construction $\vec f$ fully determines the input state $\rho_{in}$, through the location of the  input and output qubit sets on the graph ($I$ and $O$, respectively) and the dependency sets $(\vec s^x, \vec s^z)$
\end{itemize}
\textbf{Steps of the protocol:}
\begin{enumerate}
\item \textbf{State preparation}
	\begin{enumerate}
\item Bob prepares the graph state $\ket{\mathcal{G}}$.
	\end{enumerate}
\item \textbf{Measurements}\\
	For $i= 1, \dotsc , N$, repeat the following:
	\begin{enumerate}
\item Alice picks a binary digit $r_i \in \Z_2$ uniformly at random. Then, using $r_i$, $\vec s^x$, $\vec s^z$, and the function in Eq.~\eqref{eq:alphap}, she computes the angle $\alpha_i'$. Alice transmits $\alpha_i'$ to Bob.
\item Bob measures the $i$th qubit in the basis $\{ \ket{\pm_{\alpha_i'}} \}$ and transmits to Alice the measurement outcome $b_i'\in \Z_2$. 
\item Alice records $b_i = b_i' \oplus r_i$ in ${\vec b}$ and then updates the dependency sets $(\vec s^x, \vec s^z)$. If $i \in O$, then she also records $b_i$ in ${\vec p}_{\mathcal{B}}^C$. 
	\end{enumerate}
\item \textbf{Post-processing of the output}
	\begin{enumerate}
	\item Alice implements the final round of corrections on the output string by calculating $\vec p = \vec p^C \oplus \vec s^Z_O$, with $\vec s^Z_O$ the set of $Z$ corrections on the output at the end of the protocol.
	\end{enumerate}
\end{enumerate}
\end{algorithm}

We start from the situation where Alice wants to obtain the result of a particular quantum computation. Having no quantum devices of her own, the quantum computation must have a classical output. We allow Alice to control a probabilistic polynomial-time universal Turing machine (i.e.,~a classical computer with access to randomness). Alice has classical communication lines to and from Bob, the server. Bob has access to a universal (and noiseless) quantum computer. Bob could help Alice, but she does not trust him. Alice wishes to ask Bob to perform a quantum computation for her in a way that Bob obtains as little information as possible about her choice of computation. Without loss of generality, we assume that the quantum systems used in the protocol are qubits (two-level quantum systems~\cite{Nielsen2000}). In general, here and in the following, we denote by
\beq
\label{compdescr}
\DA = \{ \rho_I, \hat{U}_A, \mathcal{M} \}
\eeq
the classical description of Alice's computation, where $\rho_I$ is the $n$-qubit input state of the computation, $\hat{U}_A$ is the unitary embedding that maps $\rho_I$ to the output state $\rho_O = \hat{U}_A \rho_I \hat{U}_A^\dagger$, and $\mathcal{M}$ is the final set of measurements on $\rho_O$ required to extract the classical output. Note that we are implying that the input state can be efficiently described classically.  For instance, it could be a standard choice of input such as the $n$-qubit computational basis $\rho_I = \ket{0} \bra{0} ^{\otimes n}$. We also (rather pedantically) assume that the number of computational steps is at most polynomial in the input size.
Making the process abstract, Alice's desired task becomes equal to sampling the string
\beq \label{eq:correctness}
{ \vec p } = \{  p_i\} = \pi (\DA) := \mathcal{M}(\hat{U}_A \rho_I \hat{U}_A^{ \dagger}) ,
\eeq
where $\pi$ is a map that describes the blind operation performed by the protocol, which outputs the correct probability distribution $\{ { p}_i \}$ on the joint measurement outcomes given $\DA$ as Alice's delegated target computation. An outline of the protocol is presented in Protocol~\ref{prot:CDBQCsh}.

Let us now introduce the relevant definitions for the variables used in the protocol and describe the steps thoroughly. The initial step of the protocol is for Bob to prepare the resource state $\ket{\mathcal{G}}$ that will be used to implement the MBQC. Once the graph state $\ket{\mathcal{G}}$ is prepared by Bob, the interactive part of the protocol starts with Alice communicating to Bob the angles to be measured, one by one. Because of the randomness introduced by the results of the projective measurements, there exists the possibility that these angles must be corrected based on the outcomes
\begin{equation}
{\vec b} := (b_1, ..., b_N) \in \mathbb{Z}_2^N
\end{equation}
of Alice's would-be measurements. Nonetheless, Alice can pick a canonical set of angles
\begin{align}
	\vec \alpha \coloneqq (\alpha_1, \dotsc, \alpha_N) \in \mathcal{A}^N
\end{align}
corresponding to the positive branch case where $\vec b=\vec 0$. As discussed earlier, it is possible that the angle for qubit~$j$ must be modified based on the outcomes of the preceding $j-1$ measurements, which we denote:
\begin{align}
	\vec b_{<j} \coloneqq (b_1, \dotsc, b_{j-1}).
\end{align}
We account for this adaptation in \emph{dependency sets}
\begin{align}
	\vec s^x &\coloneqq (s^x_1, \dotsc, s^x_N) \in \Z_2^N, \\
	\vec s^z &\coloneqq (s^z_1, \dotsc, s^z_N) \in \Z_2^N,
\end{align}
which depend on the~$\vec b_{<j}$ and also on the g-flow construction, here represented by a bit string
\begin{align}
	\vec f \coloneqq (f_1, \dotsc, f_M) \in \Z^M_2
\end{align}
of length $M$ called the \emph{flow control bits} (or just "flow bits"). At this point, we still have to quantify the value of $M$. Note, though, that it represents the number of bits needed to enumerate all the possible combinations of input and output that satisfy the g-flow conditions~\cite{Danos2006,Browne2007}. Hence, for a fixed total order of the measurements, it is a function of $N$. Explicitly, the $X$ and $Z$ corrections associated with the measurement angle of each qubit $j$ are determined by the dependency sets:
\begin{align}
\label{eq:fcb}
	s^x_j &: \D[\vec b_{<j}] \times \D[\vec f] \to \Z_2, \\
	s^z_j &: \D[\vec b_{<j}] \times \D[\vec f] \to \Z_2,
\end{align}
where the function $\D$ denotes the domain of the argument. Without loss of generality, we choose~$s^z_1 = s^x_1 = 0$ since there are no previous outcomes on which these could depend. For a fixed open graph $\mathcal{G}(I,O)$, the form of the dependency sets is uniquely defined by the g-flow~\cite{Dunjko2014}. Analogously, the flow bits~$\vec f$ fully specify the dependency sets (as functions of $\vec b$). As such, the quantum circuit that Alice intends to implement is specified by the information
\begin{align}
	(\vec \alpha, \vec f) \in \mathcal{A}^N \times \Z_2^M,
\end{align}
consisting of $N$ measurement angles and $M$ flow bits for a given graph with fixed total order of measurement. Consequently, once the graph $\mathcal{G}$ is known, there exists a one-to-one correspondence $\mathcal{G}(I,O)_{n,m} \leftrightarrow \vec f$, and we can accordingly denote the corresponding MBQC measurement pattern as follows:
\begin{align}
\DA^M = (\mathcal{G}_{n,m}, \vec \alpha, \vec f ) .
\end{align}
Explicitly, note that by choosing $\vec f$, Alice is defining a unique choice of the input and output on the graph state before the protocol begins. In line with the computation description from Eq.~\eqref{compdescr} we call $\rho_I$ the input state on $\mathcal{G}$.

We now turn our attention to what kind of information Bob receives when Alice asks him to perform the measurements on her behalf. The interactive part of the protocol consists of $N$ steps. At each step $i$, Alice requests Bob to measure, in the XY-plane of the Bloch sphere, the $i$th qubit, according to the total order implied by $\mathcal{G}$, and he sends back a bit for each measurement. We identify the measurement instructions Bob receives as a list of angles
\begin{align}
	\vec \alpha' \coloneqq (\alpha'_1, \dotsc, \alpha'_N) \in \mathcal{A}^N ,
\end{align}
and we label the string of bits Alice receives from Bob as
\begin{align}
	\vec b' \coloneqq (b'_1, \dotsc, b'_N) \in \Z_2^N ,
\end{align}
while remembering that they are communicated alternately ($\alpha'_1$~to Bob, $b'_1$~to Alice, $\alpha'_2$~to Bob, $b'_2$~to Alice, etc.). Note that in the case of a dishonest Bob, the string $\vec b'$ does not need to correspond to real measurement outcomes but could have been generated by Bob through some alternative process. 

Realizing that measuring $\alpha$ can just as easily be effected by asking Bob to measure $\alpha + \pi$ and then flipping the returned outcome bit, we introduce a \emph{uniformly random} $N$-bit string
\begin{align}
	\vec r := (r_1, \dotsc, r_N) \in \Z_2^N
\end{align}
that Alice will use to pad the angles in an attempt to conceal the measurement outcomes. All that remains is to specify how $\vec \alpha'$ depends on $\vec \alpha$. This is specified by the following functional dependence~\cite{fitzsimons2017unconditionally, Danos2006, Danos2007}:
\begin{align}
\label{eq:alphap}
	\vec \alpha' = (-1)^{\vec s^x} \vec \alpha + (\vec s^z + \vec r) \pi \mod 2\pi,
\end{align}
which follows from the g-flow construction and shows how corrections change subsequent measurement angles. Here we have used multi-index notation to present the result concisely as a vector. Note that the dependency sets $(\vec s^x$, $\vec s^z)$ are updated by Alice after each measurement. To make the analysis of the protocol meaningful, we construct a domain for~$\vec \alpha$ such that the domain of all valid~$\vec \alpha'$ is the same. Thus, in general
\begin{align}
\label{eq:domainangles}
	\mathcal{A} = \bigl\{(-1)^x \theta + z \pi \, \big|\, \theta \in \mathcal{A}, x \in \Z_2, z \in \Z_2 \bigr\}.
\end{align}
Also note that now
\begin{align}
	\vec b' = \vec b \oplus \vec r,
\end{align}
where $\oplus$ indicates addition modulo $2$ for each bit. We can identify the \emph{data} that Bob receives during the interactive part of the protocol (some from Alice, some from his own measurements) as:
\begin{align}
	\text{(data Bob receives)} \coloneqq (\vec b', \vec \alpha') \in \Z_2^N \times \mathcal{A}^N.
\end{align}

The interactive part of the protocol ends when all the qubits have been measured and Alice holds the binary register $\vec b$, derived from $\vec b'$ to account for the one-time pad $\vec r$. Since Alice knows the output set $O$, whenever the $i$th qubit belongs to the set of output qubits, Alice saves $b_i$ into a second binary sequence of length $|O|$:
\begin{align}
 {\vec p}_{\mathcal{B}}^C \coloneqq (p_1, \dotsc, p_{|O|}) \in \Z^{|O|}_2 , 
\end{align}
where $p_i = b_i$, $\forall i \in O$. If Bob is honest, then ${\vec p}_{\mathcal{B}}^C$ is equivalent to ${\vec p}^C$. At the end of the protocol, this string contains the classical result of the computation, up to classical post-processing. This is accounted for by calculating ${\vec p} = {\vec p}^C \oplus {\vec s}^Z_O$, where ${\vec s}^Z_O$ is used to represent the final set of $Z$ corrections on the output qubits. Clearly, the classical nature of the client allows us to consider only quantum computations with classical output.

In order for the protocol to have any utility, we require that the output ${\vec p}$ satisfies Eq.~\eqref{eq:correctness}, a property known as correctness. The correctness of this protocol can be proved straightforwardly. Note that the positive branch of the MBQC pattern $\DA^M$ [that is, where all the measurement outcomes happen to be equal to zero ($\vec b = \vec 0$)] implements Alice's target computation $\DA$ by definition. In the circuit model, this corresponds to a quantum circuit that implements the unitary $\hat{U}_{A}$ over the correct input state $\rho_I$ and a final round of measurements whose output is the binary string ${\vec p}$~\cite{Raussendorf2003}. Below, we give a proof of the correctness of the CDBQC protocol.
\begin{theorem}[Correctness]
\label{theo:correctsh}
For honest Alice and Bob, the outcome of Protocol~\ref{prot:CDBQCsh} is correct.
\end{theorem}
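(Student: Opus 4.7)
The plan is to reduce correctness of Protocol~\ref{prot:CDBQCsh} to the established determinism of MBQC measurement patterns equipped with a g-flow. I would proceed in three conceptual steps: peel off the one-time-pad layer, invoke g-flow determinism for the bulk of the computation, and finally handle the residual Pauli byproducts that accumulate on the output register. First, I would verify that the random padding $\vec r$ is innocuous: because $\ket{\pm_{\alpha+\pi}} = \ket{\mp_\alpha}$, measuring at angle $\alpha_i + r_i\pi$ is identical to measuring at $\alpha_i$ except that the outcome is flipped whenever $r_i = 1$. An honest Bob therefore returns $b'_i$ satisfying $b'_i = b_i \oplus r_i$, where $b_i$ is the outcome that Alice would have obtained had she issued the unpadded angle $(-1)^{s^x_i}\alpha_i + s^z_i\pi$, and Alice's step~2(c) recovers exactly this $b_i$. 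Hence the joint distribution on her stored string $\vec b$ is identical to that produced by the protocol with $\vec r = \vec 0$.

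Second, stripping away the padding leaves the standard MBQC adaptive prescription $\alpha'_i = (-1)^{s^x_i}\alpha_i + s^z_i\pi$, in which $(\vec s^x, \vec s^z)$ encode the running Pauli byproducts. By the g-flow conditions (G1)--(G3) applied to the open graph specified by $\vec f$, every branch of this pattern is stepwise equivalent to the positive branch up to Pauli corrections that are propagated forward to yet-unmeasured qubits, as established in Ref.~\cite{Browne2007}. Since the positive branch of $\DA^M = (\mathcal{G}, \vec\alpha, \vec f)$ is by construction the MBQC realization of $\hat{U}_A$ on~$\rho_I$, after all non-output qubits have been measured the residual state on $O$ equals $\hat{U}_A \rho_I \hat{U}_A^\dagger$ conjugated by the accumulated byproduct operators $\hat{X}^{\vec s^x_O}\hat{Z}^{\vec s^z_O}$.

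Third, the final XY-plane measurements of the output qubits are themselves modified adaptively by Alice, so the $X$ byproducts $\vec s^x_O$ are absorbed into the measurement-basis choice (negating $\alpha_i$ when $s^x_i = 1$), while the $Z$ byproducts appear simply as bit flips on the raw outcome string $\vec p^C$. Alice's post-processing $\vec p = \vec p^C \oplus \vec s^Z_O$ then undoes those flips, producing a string distributed exactly according to $\mathcal{M}(\hat{U}_A\rho_I\hat{U}_A^\dagger) = \pi(\DA)$, as required by Eq.~\eqref{eq:correctness}. The only nontrivial step in this argument is the commutation bookkeeping that links the byproduct updates $(\vec s^x, \vec s^z)$ to the measurement-basis modifications and to the final $Z$-flip corrections on the output; this is standard MBQC accounting, and the extra randomization $\vec r$ introduces no additional difficulty once it is recognized as a reversible relabeling of measurement outcomes.
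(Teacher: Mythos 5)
Your proposal is correct and follows essentially the same route as the paper's proof: it isolates the random padding $\vec r$ as a reversible relabeling of outcomes undone in step~2(c), invokes g-flow determinism to show the adaptive pattern implements $\hat{U}_A$ on $\rho_I$ exactly as in the positive branch, and notes that the final correction $\vec p = \vec p^C \oplus \vec s^Z_O$ yields the distribution of Eq.~\eqref{eq:correctness}. The only difference is that you spell out the Pauli-byproduct bookkeeping on the output register in more detail than the paper does, which is a harmless elaboration rather than a different argument.
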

\begin{proof}
There are only two differences between Protocol~\ref{prot:CDBQCsh} and a conventional MBQC implementation of $\DA^M$. The first is the use of $\vec{r}$ to hide measurement outcomes. The effect of $\vec{r}$ is to add an additional $\pi$ to the measurement angle on certain qubits, resulting in a bit flip on the corresponding measurement result $b_i'$. However, since this is immediately undone, it has no effect on the statistics of the measurement results obtained after decoding $\vec{b}$.

The other difference is that the g-flow construction, and hence the dependency sets, is only known to Alice and not to Bob. However, this does not affect the input state, which is equivalent to the usual case if Alice is honest (i.e., if she correctly performs her role in implementing the protocol). Furthermore, if Alice updates the measurement angles correctly using the dependency sets as dictated by the g-flow, and Bob measures them accordingly, every branch of $\DA^M$ is equivalent to the positive branch. Then the measurement pattern correctly implements the unitary transformation $\rho_{out} = \hat{U}_{A} \rho_{in} \hat{U}_{A}^\dagger$. The protocol also allows Alice to identify the elements of the output string $ \vec{p}^C $ in $ \vec b $, since she knows the position of the output on the graph. Hence, when both Alice and Bob follow the protocol, the output string $ \vec{p} = {\vec p}^C \oplus {\vec s}^Z_O$ is the desired probability distribution that follows from the joint measurement of the correct quantum output.
\end{proof}

\section{Blindness analysis}
\label{chap4}
We now look at the degree of blindness for a single round of Protocol~\ref{prot:CDBQCsh}. In this setting, we consider a cheating Bob with unbounded computational power, able to deviate from the protocol and follow any strategy allowed by the laws of physics. Our aim, however, is not to verify that Bob is indeed performing the correct quantum computation as requested. Instead, we want to quantify the amount of information that Bob can access when Protocol~\ref{prot:CDBQCsh} is run only once (stand-alone) and compare it against the total amount of information needed to describe the computation. To completely identify Alice's computation, Bob needs to know the description $\DA^M$.

We identify variables with uppercase letters and particular instances of such variables with lowercase letters. The probability of a given instance~$\vec x$ of a random variable~$\vec X$ is denoted~$\Pr(\vec x)$, and averaging over~$\vec X$ is denoted~$\avg \cdot_{\vec X}$ or~$\avg \cdot$ when there is no ambiguity. Given a random variable $\vec X$, we call $N_{\vec X}$ the number of possible outcomes for the variable and $n_{\vec X} \coloneqq \log_2 N_{\vec X}$ the number of bits required to enumerate them. 

We denote the Shannon entropy~\cite{cover2012elements} of a random variable~$\vec X$ by $H(\vec X) \coloneqq \avg{-\log_2 \Pr(\vec x)}_{\vec X} \leq n_{\vec X}$, with equality if and only if $\vec X$ is uniformly random. For two random variables~$\vec X$ and~$\vec Y$, their joint entropy is written $H(\vec X, \vec Y) \coloneqq \avg{-\log_2 \Pr(\vec x,\vec y)}_{\vec X, \vec Y}$, and the conditional entropy of $\vec X$ given~$\vec Y$ is $H(\vec X|\vec Y) \coloneqq \avg{-\log_2 \Pr(\vec x|\vec y)}_{\vec X, \vec Y}$. These satisfy
\begin{align}
\label{eq:condent}
	H(\vec X|\vec Y) = H(\vec X, \vec Y) - H(\vec Y).
\end{align}
The mutual information of~$\vec X$ and~$\vec Y$ is
\begin{align}
	I(\vec X ; \vec Y)
	&\coloneqq H({\vec X}) + H({\vec Y}) - H(\vec X, \vec Y)
\nonumber \\
	&= H({\vec X}) - H(\vec X| \vec Y)
\nonumber \\
	&= H({\vec Y}) - H(\vec Y| \vec X),
\end{align}
which will be our main tool of analysis. Intuitively, ${I(\vec X ; \vec Y)}$ measures how much information~$\vec Y$ has about~$\vec X$. More precisely, it quantifies how much the entropy of~$\vec X$ is reduced, on average, when the value of~$\vec Y$ is known. Because of the symmetry of the definition, these statements also hold when the roles of $\vec X$ and $\vec Y$ are swapped.

Let us call the angles variable~$\vec A$ and the flow variable~$\vec F$. Specifying $\DA^M$, in general, therefore requires $n_{\vec A} + n_{\vec F}$ bits. In addition, we use~$\vec B$ for the eventual measurement outcomes and $\vec R$ for the (uniformly random) string of $\pi$-shift bits that is known only to Alice. In any given run of the protocol, $\vec A$ and $\vec F$ are drawn from a joint prior probability~$\Pr(\vec \alpha, \vec f)$, which is known to Bob. Thus $H(\vec A, \vec F) \leq n_{\vec A} + n_{\vec F}$ bits, with equality if and only if the prior is uniform over~$\vec F$ and~$\vec A$. Note that we do not make any assumptions about this prior in what follows.

We have seen before that a single instance of the data Bob receives at the end of Protocol~\ref{prot:CDBQCsh} is equal to $(\mathcal{G}, \vec b', \vec \alpha')$. In a stand-alone setting, this is the only data available to Bob from which he might be able to gain some information about the circuit chosen by Alice. If this protocol were to be used as a subroutine or in parallel with another protocol, then one must analyze the security in a composable framework. Such an analysis is beyond the scope of the present work and is left as an open problem. Note that the graph is considered a parameter of the protocol and not part of Alice's secret. Bob's useful information at the end of a single run of Protocol~\ref{prot:CDBQCsh} is then equal to the mutual information $I(\vec B', \vec A' ; \vec A, \vec F)$ between the variables associated with the circuit $(\vec A, \vec F)$ and Bob's data $(\vec B', \vec A')$.

In other words, we are modeling the leakage of information as an unintentional classical channel between Alice and Bob, where $(\vec A, \vec F)$ is the input of the channel and $(\vec B', \vec A')$ is the output at Bob's side. Then, the mutual information tells us how many bits of the original message Bob receives on average, when averaged over many uses of the channel. Importantly, one cannot recover from mutual information what bits of the original message are passed to Bob. For our protocol, the mutual information satisfies the following bound, which does not rely on any computational assumption but it is entirely derived from information theory.

\begin{theorem}[Blindness]
\label{theo:blindnesssh}
In a single instance of Protocol~\ref{prot:CDBQCsh} the mutual information between the client's secret input $\{\vec \alpha, \vec f\}$ and the information received by the server is bounded by
\begin{equation}
I(\vec B', \vec A' ; \vec A, \vec F) \leq H({\vec A'}).\label{eq:thm2}
\end{equation}
\end{theorem}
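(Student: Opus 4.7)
The plan is to start from the chain rule for mutual information,
\begin{align*}
I(\vec B', \vec A' ; \vec A, \vec F) = I(\vec A' ; \vec A, \vec F) + I(\vec B' ; \vec A, \vec F \mid \vec A'),
\end{align*}
and observe that the first term is automatically bounded by $H(\vec A')$ since mutual information is upper-bounded by the entropy of either argument. The whole proof then reduces to showing that the second term vanishes, i.e., that $\vec B'$ is conditionally independent of the client's secret $(\vec A, \vec F)$ given $\vec A'$.

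To establish $I(\vec B' ; \vec A, \vec F \mid \vec A') = 0$, I would show that both conditional entropies $H(\vec B' \mid \vec A')$ and $H(\vec B' \mid \vec A', \vec A, \vec F)$ equal $N$, so that their difference is zero. Both claims ultimately rest on the one-time pad: by construction $b'_i = b_i \oplus r_i$ with $r_i$ uniform and independent of everything else, and in a g-flow-driven MBQC pattern each outcome $b_i$ is itself a Bernoulli-$1/2$ random variable (because the reduced state of every unmeasured qubit with at least one neighbor is maximally mixed, making XY-plane outcomes uniform independently of the angle). Consequently $\Pr(\vec B' \mid \vec A, \vec F) = 2^{-N}$.

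Extending this to the joint variable $(\vec A', \vec B')$ is the more delicate step. Here one verifies that, for fixed $(\vec A, \vec F)$, the map $(\vec B, \vec R) \mapsto (\vec A', \vec B')$ defined by $\vec B' = \vec B \oplus \vec R$ together with Eq.~\eqref{eq:alphap} is a bijection onto a support of size $2^{2N}$. This is done recursively: from $\alpha'_1$ and $\alpha_1$ one solves for $r_1$, hence $b_1 = b'_1 \oplus r_1$; this determines $(s^x_2, s^z_2)$ through the flow-bit description $\vec f$, which lets one solve for $r_2$ from $\alpha'_2$, and so on. Therefore $\Pr(\vec A', \vec B' \mid \vec A, \vec F) = 2^{-2N}$ on its support, which yields $\Pr(\vec B' \mid \vec A', \vec A, \vec F) = 2^{-N}$ and hence $H(\vec B' \mid \vec A', \vec A, \vec F) = N$. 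Averaging over Alice's (unknown but fixed) prior on $(\vec A, \vec F)$ preserves the uniformity and gives $H(\vec B' \mid \vec A') = N$ as well.

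Combining these two facts with the chain rule delivers $I(\vec B', \vec A' ; \vec A, \vec F) \le H(\vec A')$. I expect the main obstacle to be the quantum-mechanical input — the assertion that, along every branch of a g-flow-corrected MBQC, each upcoming measurement outcome is Bernoulli-$1/2$ — since this is what allows the classical one-time pad argument to propagate cleanly through the interactive protocol. Everything else is bookkeeping: the bijection argument and the chain-rule decomposition are purely combinatorial and information-theoretic, but they rely crucially on this determinism property imported from the MBQC framework summarized in Sec.~\ref{chap2}.
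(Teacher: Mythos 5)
There is a genuine gap, and it sits exactly where you predicted the ``main obstacle'' would be --- but the obstacle is worse than you think, in two ways. First, Theorem~\ref{theo:blindnesssh} is proved in the paper for an arbitrary (cheating) Bob: in the Appendix the factor $\Pr(b'_j \mid \vec b'_{<j}, \vec \alpha'_{\leq j})$ is left completely arbitrary, since $\vec B'$ need not come from honest measurements at all. Your argument, by contrast, derives $\Pr(\vec B' \mid \vec A, \vec F)=2^{-N}$ from the quantum statistics of an honest run; for a Bob who, say, always returns $b'_j=0$, one has $H(\vec B'\mid \vec A,\vec F)=0$, so neither of your ``$=N$'' claims holds in the adversarial setting the theorem addresses. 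Second, even for an honest Bob the key claim $H(\vec B' \mid \vec A', \vec A, \vec F)=N$ is false: the decoded outcomes $b_j$ are Bernoulli-$1/2$ only for non-output qubits; for $j\in O$ (and more generally once all neighbors of $j$ are measured) $b_j$ carries the classical result of the computation and can be biased or even deterministic. Since, conditioned on $(\vec A,\vec F,\vec A',\vec B'_{<j})$, the pad bit $r_j$ is \emph{uniquely determined} by inverting Eq.~\eqref{eq:alphap} --- the very recursion you invoke for your bijection --- the returned bit $b'_j=b_j\oplus r_j$ inherits the bias of $b_j$, so $H(\vec B'\mid \vec A',\vec A,\vec F)<N$ in general and $I(\vec B';\vec A,\vec F\mid \vec A')$ does not vanish (knowing the secret can make the output bits predictable given the transcript). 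For the same reason your ``bijection $\Rightarrow \Pr(\vec A',\vec B'\mid\vec A,\vec F)=2^{-2N}$'' step fails: injectivity of $(\vec b,\vec r)\mapsto(\vec\alpha',\vec b')$ gives a uniform distribution on the image only if $(\vec B,\vec R)$ is uniform, and $\vec B$ is not.

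The chain-rule decomposition itself is fine, but to close it you would need $I(\vec B';\vec A,\vec F\mid\vec A')\leq H(\vec A'\mid\vec A,\vec F)$, equivalently $H(\vec A'\mid\vec A,\vec F)+H(\vec B'\mid\vec A',\vec A,\vec F)=H(\vec B',\vec A'\mid\vec A,\vec F)\geq N$ --- which is precisely the paper's Lemma~\ref{lem:condent}, not two separate uniformity statements. The paper obtains it by writing the full joint distribution with Bob's conditional probabilities arbitrary, bounding them by $1$, and using only that $\vec R$ is uniform and that at most one $\vec r$ is consistent with any transcript, giving $\Pr(\vec b',\vec\alpha'\mid\vec\alpha,\vec f)\leq 2^{-N}$. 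That route requires no claim about the randomness of measurement outcomes and no honesty assumption, which is what makes the bound hold; I recommend recasting your argument in that form.
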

\begin{proof}
From the definition of mutual information, we have 
\begin{align}
&I(\vec B', \vec A' ; \vec A, \vec F) \nonumber = H(\vec B', \vec A') - H(\vec B', \vec A'|\vec A, \vec F).
\end{align}
Applying the inequality $H(\vec X,\vec Y) \leq H(\vec X) + H(\vec Y)$, together with the fact that $H(\vec B') \leq n_{\vec B'} = N$, to the above equation yields
\begin{align}
I(\vec B', \vec A' ; \vec A, \vec F) &\leq H(\vec A') + N - H(\vec B', \vec A'|\vec A, \vec F).
\end{align}
What remains to be shown is that $H(\vec B', \vec A'|\vec A, \vec F) \geq N$. This result is proved as Lemma~\ref{lem:condent} in the Appendix~\ref{app:jointprob} by bounding $\text{Pr}(\vec b', \vec \alpha'|\vec \alpha, \vec f) \leq 2^{-N}$ based on the full joint probability for the protocol. With this bound in place, Eq.~\eqref{eq:thm2} directly follows.
\end{proof}

The conditional entropy $H(\vec A, \vec F|\vec B', \vec A')$ quantifies the amount of information that, on average, remains unknown to Bob about Alice's computation at the end of Protocol~\ref{prot:CDBQCsh}. As mentioned previously, in the case where Alice chooses the measurement angles~$\vec A$ uniformly randomly from a finite set, one~$A_j$ for each qubit, and she chooses the flow $\vec F$ uniformly randomly from the set of all flows compatible with the total order implicit in $\mathcal{G}$, then
\begin{equation}
H(\vec A, \vec F) = n_{\vec A} + n_{\vec F}. \label{eq:uni_a_f}
\end{equation}
In this case by calculating the conditional entropy 
\begin{align}
H(\vec A, \vec F | \vec B', \vec A') 
	&= H(\vec A, \vec F) - I(\vec B', \vec A' ; \vec A, \vec F)
	,
\end{align}
we have $H(\vec A, \vec F | \vec B', \vec A')  \ge n_{\vec F}$ because of Theorem~\ref{theo:blindnesssh}. Note that Theorem~\ref{theo:blindnesssh} guarantees zero mutual information for a single run of Protocol 1 only if $n_{\vec A} = 0$, which means only one choice of measurement angle for each qubit. However, the structure of the domain of $\vec \alpha$ and $\vec \alpha'$ [see Eq.~\eqref{eq:domainangles}] forbids such a choice. A minimal choice of angles that is not classically simulable (via the Gottesman-Knill theorem~\cite{gottesman1998heisenberg}) is given by 
\begin{equation}
\label{angdom}
\mathcal{A} = \left\{ \frac{\pi}{4}, \frac{3\pi}{4},\frac{5\pi}{4} ,\frac{7\pi}{4} \right\} .
\end{equation}
In this case, for each angle $\alpha_j$, one has $n_{\alpha_j} = 2$, so $n_{\vec A} = 2N$. Since $H(\vec A') \leq n_{\vec A'} = n_{\vec A}$, Bob gains at most two bits of information per qubit measured, with this information being a nontrivial function of both $\vec \alpha$ and $\vec f$.

\section{Application to cluster states}
\label{chap5}
\begin{figure}
\begin{center}
\includegraphics[width=0.7\linewidth]{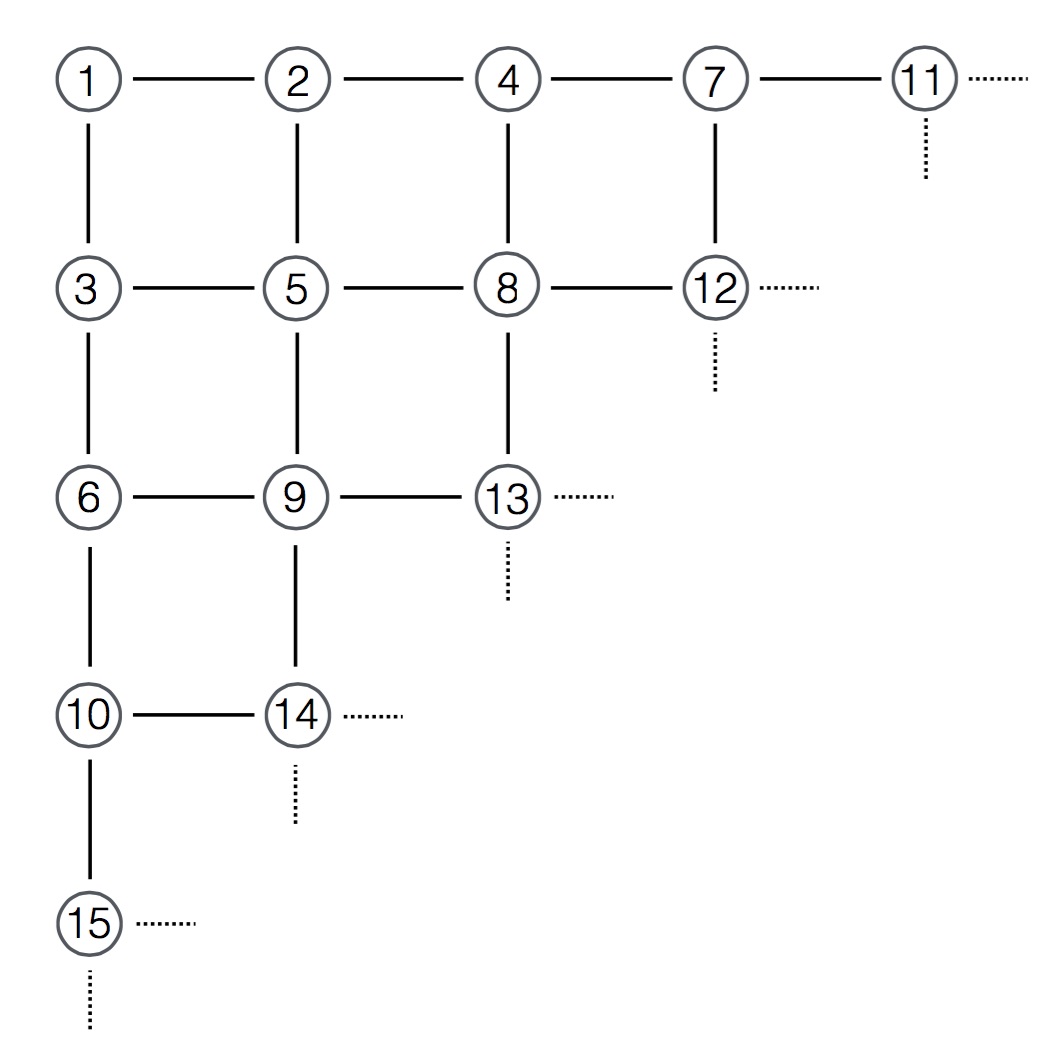}
\end{center}
\vspace*{-5mm}
\caption{Total order of the measurements for a generic $n\times m$ cluster state used as a resource state in Protocol~\ref{prot:CDBQCsh}.\label{fig:TOCS}}
\end{figure}

To conclude the security analysis of the stand-alone scenario, it is necessary to calculate the exact value of $N_{\vec F}$, which in turn gives us the value of $n_{\bf F}$ and hence the lower bound of the conditional entropy for the case of uniform variables $\vec{A}, \vec{F}$ as explained above. Clearly this depends on the choice of $\mathcal{G}$. Here, we consider the case of cluster states, where $\mathcal{G}$ is taken to be $\mathcal{G}_{n,m}$ with implicit total ordering of vertices as illustrated in Fig.~\ref{fig:TOCS}. Note that $M$, the length of the bit string $\vec f$, is equal to $\log_2 N_{\vec F}$. When condition (G4) is included, the g-flows we consider correspond to \textit{focused g-flows}~\cite{mhalla2011graph}. Hence, there is a one-to-one correspondence between an instance of a g-flow $\vec f$ of $\vec F$ and a choice of input and output set on the graph~\cite{mhalla2011graph}. Here, we place a lower bound on $M$ by counting flows that satisfy conditions (G1)--(G4). The use of the additional constraint (G4), which is not implicit in the definition of g-flow, implies that we are undercounting the total number of flows: hence,
\begin{align}
\label{eq:numflows}
	N_{\vec F} \geq \# \mathcal{G}(I,O)_{n,m},
\end{align}
where $\# \mathcal{G}(I,O)_{n,m}$ corresponds to the number of possible ways one can define an open graph that satisfies conditions (G1)--(G4). We now show that this quantity can grow exponentially in the dimensions of the cluster state such that $n_{\vec F} \propto N$.
\begin{theorem}
\label{theo:flows}
For a cluster state corresponding to $\mathcal{G}_{n,m}$ with fixed total order as depicted in Fig.~\ref{fig:TOCS}, the number of different open graphs $\mathcal{G}(I,O)$ satisfying conditions (G1)--(G4) is given by
\begin{align}
\#\mathcal{G}(I,O)_{n,m} &= F_{2\min(n,m)+1}^{|n-m|} \prod_{\mu=2}^{\min(n,m)} F_{2\mu}^2 . \label{eq:numberofflows}
\end{align}
where $F_i$ is the $i$th Fibonacci number.
\end{theorem}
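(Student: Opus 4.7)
The plan is to translate the counting of valid open graphs into a combinatorial problem on the lattice and then to decompose it anti-diagonal by anti-diagonal, where a Fibonacci count emerges from a classical identity on binary strings. First, I would check that conditions (G1)--(G4) together with the row-major total ordering of Fig.~\ref{fig:TOCS} force, for every $v \in O^c$, the correcting set $g(v)$ to be a singleton containing either the right neighbor or the down neighbor of $v$ (whichever exists), while (G4) forbids any vertex from being targeted by arrows from both its left and its upper neighbor simultaneously. Interpreting each $g(v)$ as a directed right/down edge, the valid open graphs $\mathcal{G}(I,O)$ are in bijection with assignments of right/down arrows to the edges of $\mathcal{G}_{n,m}$ in which every vertex has at most one incoming and at most one outgoing arrow; the input set $I$ is then the set of vertices of in-degree zero and $O$ the set of vertices of out-degree zero.

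Next, I would group the edges by which pair of consecutive anti-diagonals they join, obtaining $n+m-2$ ``transitions,'' and observe that the constraints decouple across transitions: at every vertex~$v$ the outgoing constraint involves only edges in the transition leaving $v$'s anti-diagonal, while the incoming constraint involves only edges in the transition entering it. Hence the total count factorizes as a product over transitions. For a non-plateau transition between anti-diagonals of lengths $\ell$ and $\ell+1$, I would interleave the $2\ell$ edge variables along the shorter anti-diagonal (alternating rights and downs) into a linear sequence; a short case-check shows that the combined constraints read ``no two adjacent entries of this sequence are simultaneously $1$.'' Invoking the classical identity that binary strings of length $L$ with no two adjacent $1$s are counted by $F_{L+2}$, each such transition contributes $F_{2\ell+2}$. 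A plateau transition between two length-$\min(n,m)$ anti-diagonals is analyzed the same way, except that the extreme vertex of the shorter anti-diagonal lies on the grid boundary and loses one down-edge; its sequence has length $2\min(n,m)-1$ and contributes $F_{2\min(n,m)+1}$.

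To assemble the formula, I would count the transitions by type: for each $\mu \in \{2,\dots,\min(n,m)\}$ there are exactly two non-plateau transitions (one growing, one shrinking) whose longer anti-diagonal has length $\mu$, together contributing $F_{2\mu}^2$; and the plateau is traversed $|n-m|$ times, contributing $F_{2\min(n,m)+1}^{|n-m|}$. Multiplying gives Eq.~\eqref{eq:numberofflows}. The step I expect to require the most care is the initial bijection: one must argue that (G4) is precisely what forces $g(v)$ to be a single later-neighbor of~$v$, and that once (G4) is imposed the remaining conditions (G1)--(G3) reduce to the purely local ``at most one in, at most one out'' rule at each vertex, without any hidden nonlocal constraint arising from the $\mathrm{Odd}$ condition on the grid.
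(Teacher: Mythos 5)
Your proposal is correct and follows essentially the same route as the paper's proof: your anti-diagonal ``transitions'' are exactly the paper's diagonal cuts, your observation that the in-degree/out-degree constraints decouple across transitions is the paper's independence-of-cuts argument, and your bookkeeping over growing, shrinking, and plateau cuts (contributions $F_{2\mu}$, $F_{2\mu}$, and $F_{2\min(n,m)+1}$) reassembles Eq.~\eqref{eq:numberofflows} in the same way. The only difference is cosmetic---you obtain each cut's count by mapping its edge variables to length-$L$ binary strings with no two adjacent $1$s (counted by $F_{L+2}$) rather than running the paper's two-variable recursion in $A_\mu^{\to}$ and $A_\mu^{\not\to}$---and the bijection step you flag (reducing (G1)--(G4) to the local right/down-arrow picture with at most one arrow in and out per vertex) is handled at the same informal level in the paper's own ``information flow across a cut'' argument.
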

\begin{proof}
The proof of this theorem is somewhat involved. We begin by considering a set of diagonal cuts across $\mathcal{G}_{n,m}$, as depicted in Fig.~\ref{fig:cuts}(a). As we are considering only those flows that satisfy condition (G4), there is a straightforward constraint on the information flow, which can be seen by isolating a single cut and the vertices linked by edges that the cut passes through [see Fig.~\ref{fig:cuts}(b)]. In the following discussion, we consider only the vertices connected by edges through which a particular cut passes. Because of the total ordering imposed on the vertices of $\mathcal{G}_{n,m}$, conditions (G1)--(G3) ensure that information can only pass through a cut from the left side to the right side and not in the reverse direction. Condition (G4) then allows exactly the set of flows where for any vertex $k$ on the right side of the cut, information flows to $k$ from at most one of its neighbors on the left side of the cut. So, if $i,j \in \mathcal{N}(k)$, then $k \notin g(i)\cap g(j)$.

\begin{figure}
\begin{center}
\includegraphics[width=0.7\columnwidth]{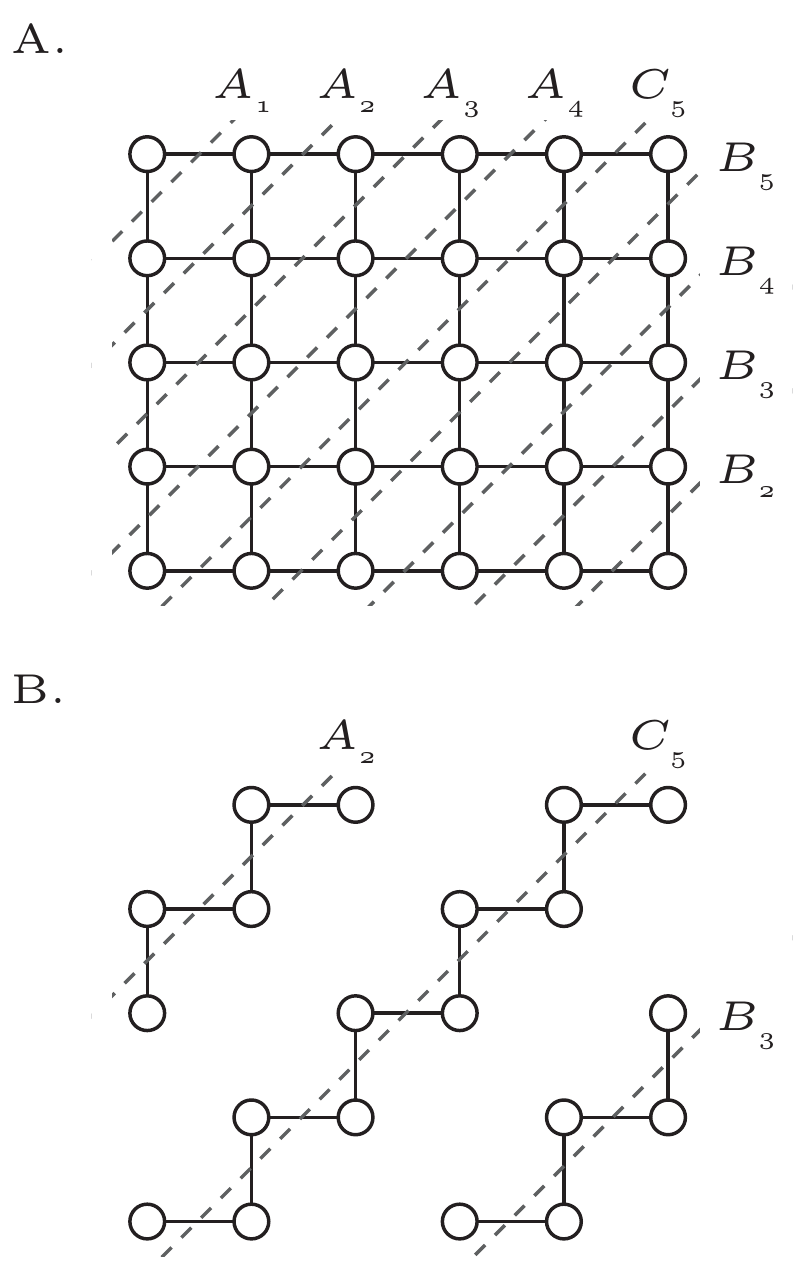}
\end{center}
\caption{(a)~A cluster-state graph $\mathcal{G}_{n,m}$ with diagonal cuts imposed. The flow across each cut is independent, and the number of possible flows across each cut is indicated. (b)~Several cuts with their neighboring vertices isolated. \label{fig:cuts}}
\end{figure}

We divide the cuts into three types: (i)~those having one less neighboring vertex on the left side of the cut than on the right, (ii)~those having one more neighboring vertex on the left side of the cut than on the right, and (iii)~those having an equal number of neighboring vertices on both sides of the cut. We label the total number of flows for each type as $A_\mu$, $B_\mu$, and $C_\mu$, respectively, where $\mu$ indicates the number of neighboring vertices on the left side of the cut, as shown in Fig~\ref{fig:cuts}(b).

In order to quantify these, we begin by noting that $A_\mu = A_\mu^{\to} + A_\mu^{\not\to}$, where $A_\mu^{\to}$ denotes the number of flows with the restriction that information flows from the uppermost neighboring vertex on the left side of the cut to the uppermost neighboring vertex on the right side of the cut, and $A_\mu^{\not\to}$ denotes the number of flows where this constraint is not satisfied. These quantities can be calculated using a simple recursion relation, as follows.

Here, $A_\mu^{\to}$ allows precisely one possibility for flow between the uppermost vertices of the cut, precluding flow from the uppermost vertex on the left side of the cut to lower vertices on the right side. Hence the remaining $\mu-1$ vertices on the left side and $\mu-1$ on the right side will be isolated and identical to the situation where the cut partitions one fewer vertex on each side. Thus, $A_\mu^{\to} = A_{\mu-1} = A_{\mu-1}^{\to} + A_{\mu-1}^{\not\to}$.

Calculating $A_\mu^{\not\to}$ is a little more involved, as there are two possibilities to consider. The first is that no information flows from the uppermost vertex on the left side of the cut across the cut (in which case it is an output). In this case, isolation of the lower vertices occurs as in the analysis of  $A_\mu^{\to}$; hence, there are $A_{\mu-1}^{\to} + A_{\mu-1}^{\not\to}$ possible flows. In the second case, no information can flow into the second uppermost vertex on the right side of the cut; hence, only $A_{\mu-1}^{\not\to}$ flows are possible. Thus, $A_{\mu}^{\not\to}  = A_{\mu-1}^{\to} + 2A_{\mu-1}^{\not\to} = A_{\mu}^{\to}+A_{\mu-1}^{\not\to}$. These correspondences are depicted in Fig.~\ref{fig:induction}.

Note that $A_{\mu}^{\to}$ and $A_{\mu}^{\not\to}$ satisfy the same recursion relation as the Fibonacci sequence when ordered as $(A_{1}^{\to}, A_{1}^{\not\to}, A_{2}^{\to}, A_{2}^{\not\to}, \dotsc)$ and starting with $A_{1}^{\to} = 1 = F_2$ and $A_{1}^{\not\to} = 2 = F_3$. It follows that $A_\mu = F_{2\mu+2}$. By similar arguments, we have $B_\mu = F_{2\mu}$ and $C_\mu = F_{2\mu+1}$.

\begin{figure}
\begin{center}
\includegraphics[width=\columnwidth]{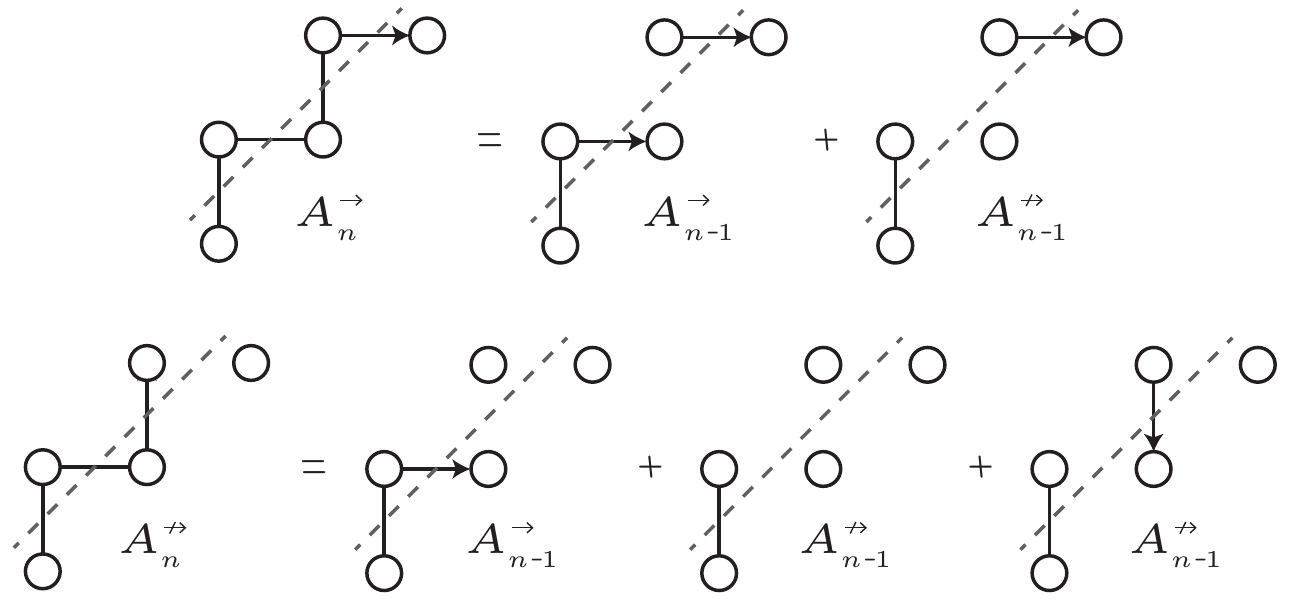}
\end{center}
\caption{$A_{\mu}^{\to}$ and $A_{\mu}^{\not\to}$ can be constructed recursively, as shown above. Here arrows indicate information flow, while edges indicate the possibility of information flow.\label{fig:induction}}
\end{figure}

It remains only to be noted that the configuration of the information flow across one cut is independent of the information flow across other cuts; hence, the total number of possible flows is given by the product of the possible flows across each cut. Therefore,
\begin{align}
&\#\mathcal{G}(I,O)_{n,m} =\nonumber \\
& \Bigg(\prod_{\mu=1}^{\min(n,m)-1} F_{2\mu + 2}\Bigg)  F_{2\min(n,m)+1}^{|n-m|} \Bigg(\prod_{\nu=2}^{\min(n,m)} F_{2\nu}\Bigg) ,\label{eq:gio-exact}
\end{align}
which simplifies to Eq.~\eqref{eq:numberofflows} as required.
\end{proof}

The Fibonacci numbers can be written exactly in terms of the golden ratio $\phi = {\frac{1}{2} (1+\sqrt 5)}$ as
\begin{equation}
F_k = \frac{\phi^k - (-\phi)^{-k}}{\sqrt{5}}.
\end{equation}
For large cluster states ($n, m \gg 1$), the number of possible flows is given by
\begin{align}
\#\mathcal{G}(I,O)_{n,m} &\approx 5^{-\frac{|n-m|}{2}}  \phi^{(2\lambda+1)|n-m|} \prod_{\nu=2}^{\lambda} \frac{\phi^{4\nu}}{5}\\
&= 5^{-\frac{(n+m-2)}{2}}  \phi^{2mn+ m + n - 4},
\end{align}
where $\lambda = \min(n,m)$. The above approximation is obtained by noting that $F_k \approx \frac{\phi^k}{\sqrt{5}}$, since $\abs{(-\phi)^{-k}} \ll 1$ for large $k$, and using this to approximate Eq.~\eqref{eq:numberofflows}.

Taking $N=nm$ and assuming $m$ grows polynomially in $n$, then $m=\text{poly}(n)$, and
\begin{align}
	\#\mathcal{G}(I,O)_{n,m} = 2^{2 N \log_2 \phi + O(N^\epsilon)}
\end{align}
for some $\epsilon<1$. In such a case, using Eq.~\eqref{eq:numflows} and evaluating to leading order,
\begin{align}
\label{eq:nF}
	 n_{\vec F} \geq \log_2 \#\mathcal{G}(I,O)_{n,m} \approx 1.388\,N.
\end{align}
This result implies that the conditional entropy $H(\vec A, \vec F | \vec B', \vec A') \geq 1.388\,N$. For the case of a computation chosen uniformly at random by Alice, the total number of bits required to entirely describe her computation is approximately equal to $3.388\, N$. However Bob only receives exactly $2N$ bits of information from Alice (the angles $\vec \alpha'$). From Theorem 2.1 in Ref.~\cite{Ambainis1999}, it is easy to verify that Bob cannot decode Alice's computation entirely with unit probability. Additionally, Theorem 2.4 in Ref.~\cite{Nayak1999} shows that Bob cannot guess Alice's computation with probability greater than $2^{-1.388 N}$.

To make sense of this result, one should remember that a particular \emph{deterministic} MBQC computation is characterized by identifying an input and output set on the underlying graph of the resource state, together with an information flow construction. This structure determines how the quantum information is deterministically transferred via projective measurements from the physical location of the input to the output. Furthermore, once the input and output systems are fixed on the graph, the flow, if it can be constructed, is unique. Hence, in the canonical approach to MBQC, the usual procedure is to fix the input and the output and assign a partial order of measurements that guarantees determinism under a specific set of rules. Consequently, the flow construction imposes a total order of measurements, which must respect the partial one.

\begin{figure}
\begin{center}
\includegraphics[width=\linewidth]{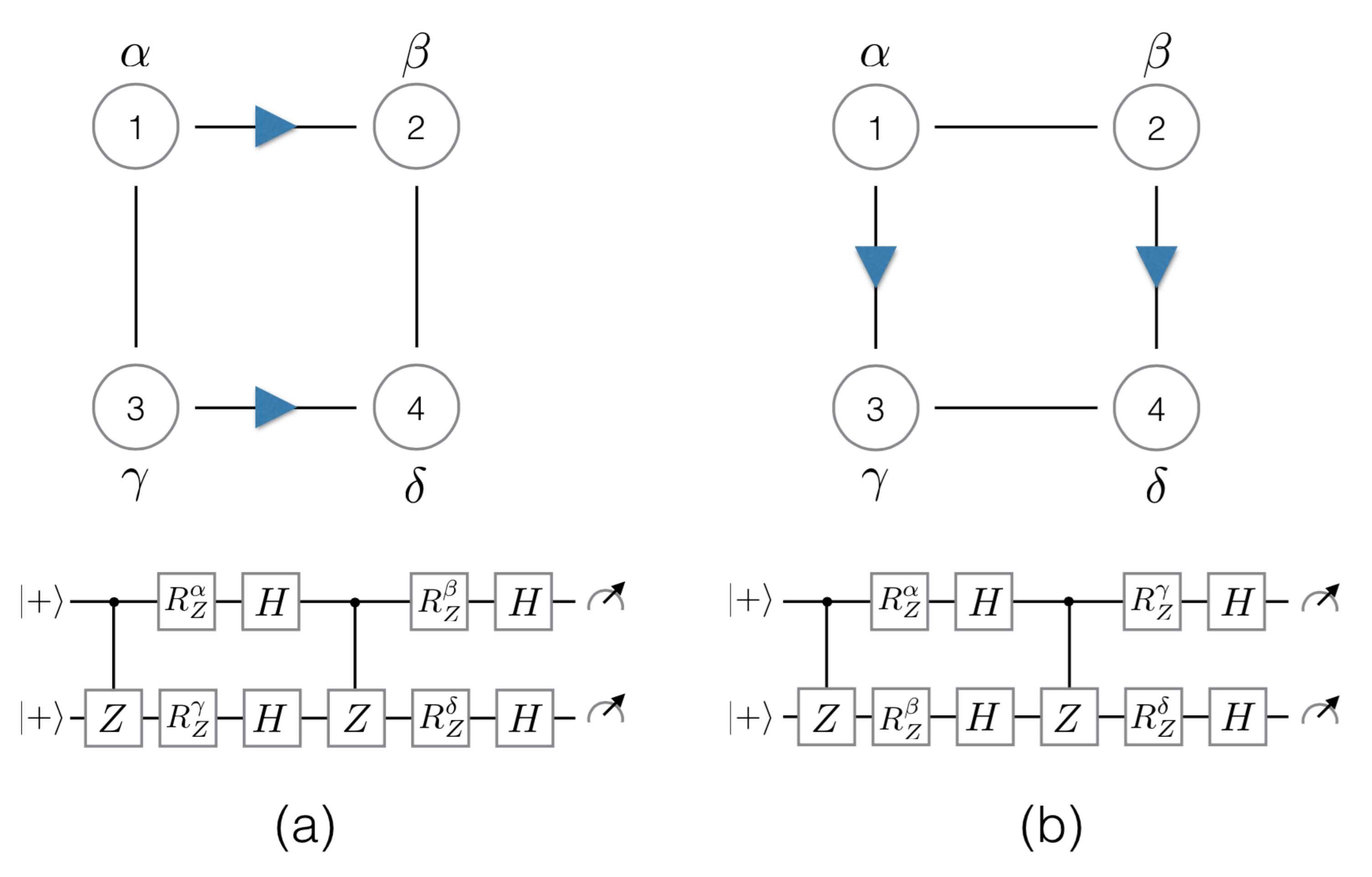}
\caption{A $2 \times 2$ cluster state with measurement angles $\{ \alpha, \beta, \gamma, \delta \}$. In this example we show how to encode two different computations using a fixed total order of measurements $\{1,2,3,4\}$. The difference follows from the choice of the $\mathcal{G}(I,O)$. In diagram~$(a)$, the input set is $\{1,3 \}$, and the output set is $\{2,4\}$, with g-flow function $g(1)=\{2\}$, and $g(3)=\{4\}$. The equivalent circuit associated with the positive branch of this MBQC pattern is shown below. Note that any final round of corrections is pushed into the classical post-processing of the output. In diagram~$(b)$, the input set is $\{1,2 \}$, the output set is $\{3,4\}$, $g(1)=\{3\}$, and $g(2)=\{4\}$. Similarly to (a), we show the circuit of the positive branch of the MBQC pattern, and the final round of corrections is classically post-processed.\label{fig:5}
}
\end{center}
\end{figure} 

Here, we \emph{have reversed} this point of view. As such, Theorem~\ref{theo:blindnesssh} is based on the nontrivial observation that, for a given MBQC resource state with a fixed total order of measurements, choices of g-flow, i.e. choices of input and output vertices on the graph that correspond to different deterministic quantum computations, are generally not unique. Nonetheless, Alice's choice of input and output enforces a unique computation among all the possible choices. This choice of g-flow is not communicated to the server and is kept hidden by Alice, who uses it to update the classical instructions sent to Bob. This observation makes it possible for a client to conceal the flow of quantum information from a quantum server classically instructed on what operations to perform. In particular, since a large number of other computations are still compatible with the information Bob receives, the achieved blindness follows from the ambiguity about the flow of information on the graph. Furthermore, our protocol circumvents the scheme-dependent no-go theorem for classical blind quantum computing stated in Ref.~\cite{Morimae2014}. Here, we do not make use of any affine encryption on the client's side, but as mentioned above, we use flow ambiguity to encode a part of the client's computation. As a consequence of this encoding, Protocol~\ref{prot:CDBQCsh} requires multiple rounds of communication between the client and server. This requirement is in direct contrast with the assumptions of Ref.~\cite{Morimae2014}, where only one round of communication is allowed.

\begin{figure}[!b]
\label{9cases}
\begin{center}
\includegraphics[width=\columnwidth]{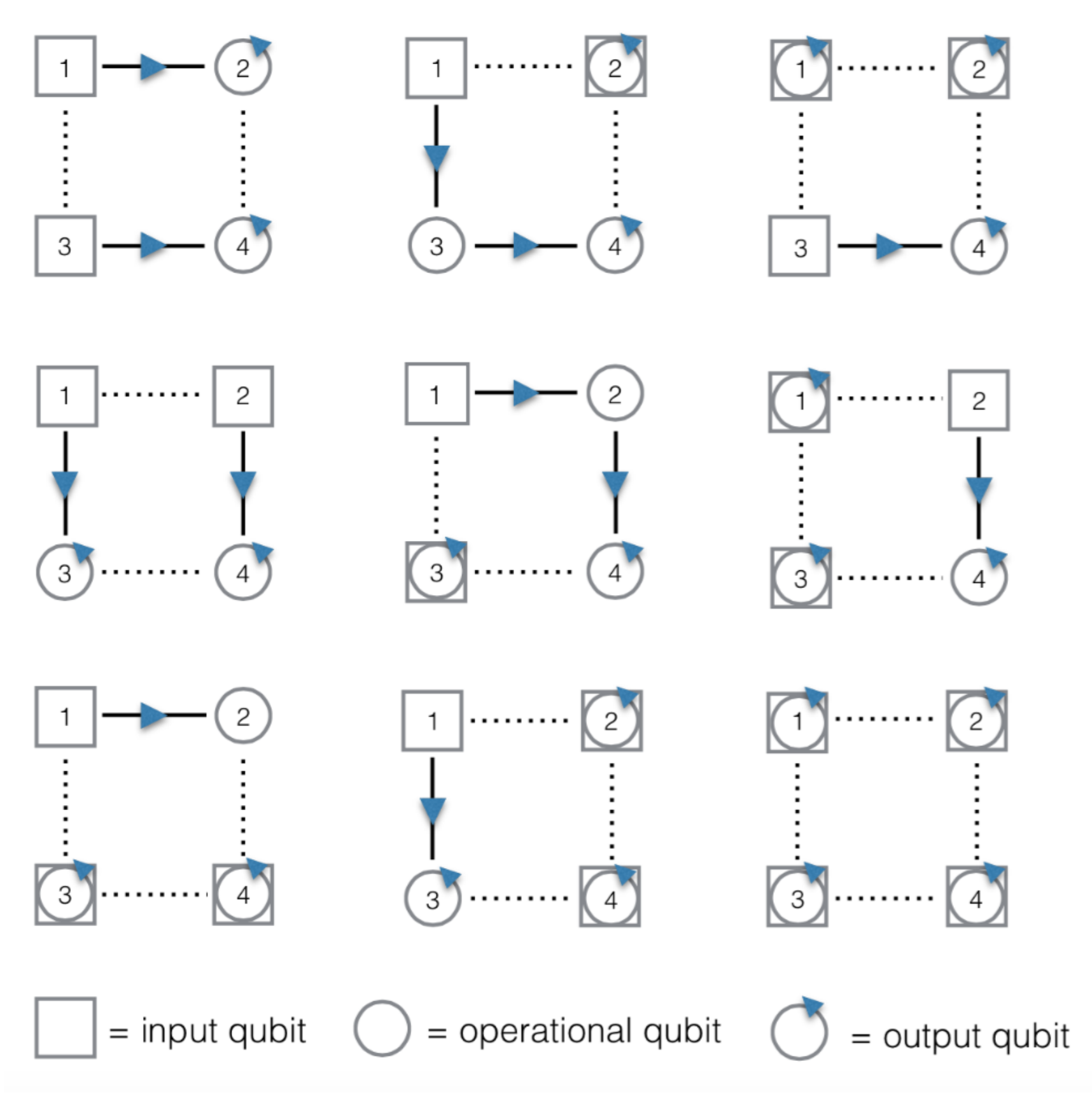}
\end{center}
\caption{List of the nine possible $\mathcal{G}(I,O)_{2,2}$ combinations (and associated patterns) with g-flow for the cluster state $| \text{CS} \rangle_{2,2}$. The arrows indicate the direction of the quantum information flow. Note that overlapping input and output sets are allowed. All the patterns implement unitary embeddings on the input state.\label{fig:2x2flows}}
\end{figure} 

\begin{figure}[t!]
\includegraphics[width=\columnwidth]{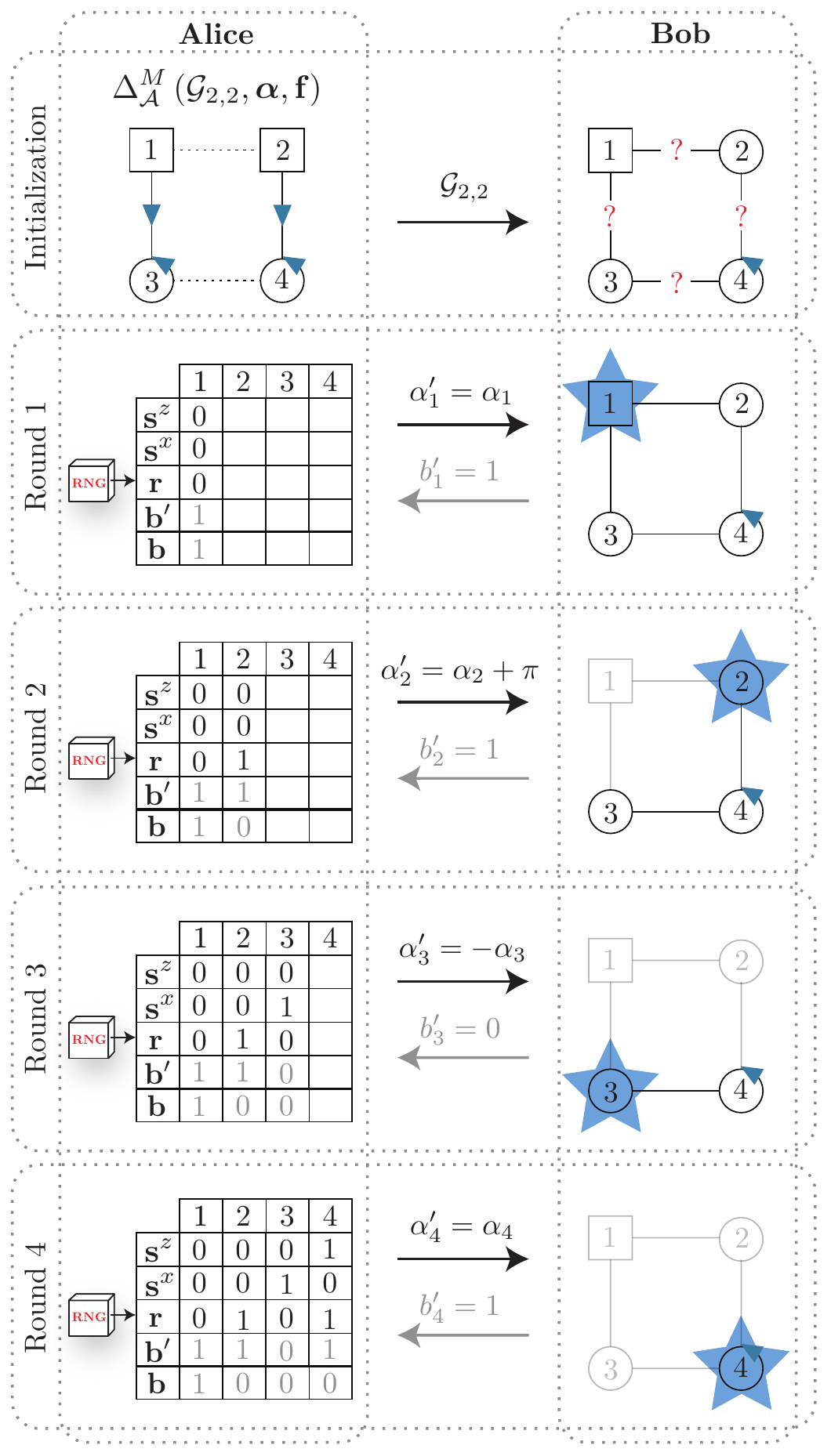}
\caption{Illustration of an exemplary run of Protocol~\ref{prot:CDBQCsh}. At the start of the protocol, Alice's computation is expressed as a measurement pattern on a graph, in this case $\mathcal{G}_{2,2}$. This is communicated to Bob, who prepares the initial state. The computation then proceeds in rounds with Alice computing the relevant entries of $\vec{s}^x$ and $\vec{s}^z$ and, using these together with $r_i$, the measurement angle $\alpha'_i$. The measurement angle $\alpha'_i$ is communicated to Bob, who performs the measurement and returns the result to Alice as $b'_i$. From this, Alice computes $b_i$ as $b'_i \oplus r_i$. This process is repeated until all qubits have been measured. At the end of the protocol, for any fixed transcript of the communication (composed of $\vec\alpha'$ and $\vec{b'}$), it is always possible to find a choice for $\vec{\alpha}$ consistent with the transcript for any choice of $\vec{f}$. In other words, for any of the possible g-flow configurations shown in Fig.~\ref{fig:2x2flows}, Bob can find an~$\vec \alpha$ that would have led to the transcript he recorded, which means any of those g-flows is possible. This ambiguity is responsible for partially hiding Alice's computation from Bob.
\label{fig:gflowex}}
\end{figure}

We can additionally make two important observations. The first is that the circuits implementable on $\mathcal{G}_{n,m}$ are not classically simulable unless $\text{BPP} = \text{BQP}$. This stems from the fact that the cluster state is universal with only XY-plane measurements, as has recently been proven in Ref.~\cite{mantri2017universality}. The above bound provides an exponential lower bound on the number of consistent flows for all cluster states. The second observation is that the computations corresponding to different choices of flow are not equivalent, even when classical post-processing is allowed. This can most easily be seen by considering an example. We consider the simplest case of the $2 \times 2$ plaquette $\ket{\text{CS}}_{2,2}$. In Fig.~\ref{fig:5} we show an example of two different choices of open graphs compatible with the underlying total order of measurements. Both choices satisfy the g-flow conditions and as a result they correspond to two deterministic MBQC   patterns, i.e. two different and well-defined computations. Since in this particular case the input state ($\hat{C}_{Z} \ket{{+}{+}}$) is equivalent, the difference is dictated by the unitary transformation (specified by the measurement angles) that acts on it. As can be seen from Fig.~\ref{fig:5}, the corresponding circuits are different and perform different unitaries. Because of the flow ambiguity and the obfuscation due to the one-time pad, a quantum server that was to perform the measurements following Protocol~\ref{prot:CDBQCsh}, at the end of the procedure, would not have enough information to exactly identify Alice's choice of open graph. For $\mathcal{G}_{2,2}$ there are nine possible flow configurations as expected from Eq.~\eqref{eq:numberofflows}, which are depicted in Fig.~\ref{fig:2x2flows}. Given any fixed transcript of the protocol, for each flow there exists a choice of $\vec \alpha$ such that it is consistent with the transcript. An example run of Protocol~\ref{prot:CDBQCsh} is presented in Fig.~\ref{fig:gflowex} for the $\mathcal{G}_{2,2}$ case. 

As a final comment, it is clear that there exist cases where, for a fixed graph and choice of angles, different choices of flows will correspond to the same computation. For instance, referring to Fig.~\ref{fig:2x2flows}, measuring all qubits with the same angle would give a two-to-one correspondence for some of the computations. However, it is reasonable to conjecture that when the angles are chosen from sets of large cardinality the mapping will be close to one to one. The full characterization of the mapping is left as an open problem.

\section{Discussion and conclusions}
\label{chap6}
Our overall motivation in this work has been to explore the possibility of classically driven blind quantum computation. While this may seem an impossible task, the fact that multiple nonequivalent computations in the MBQC model can yield the same transcript of measurement angles and results, even when the resource state and order of measurements are fixed, allows the tantalizing possibility that it may be possible for a classical user to hide a computation from a quantum server. Protocol~\ref{prot:CDBQCsh} makes use of this flow ambiguity to provide some measure of hiding for quantum computations chosen from certain restricted sets. Our intention in introducing this protocol is not to provide a practical cryptographic protocol but rather to demonstrate that it is indeed possible to hide nonequivalent quantum computations using this flow ambiguity. As such, we concentrate on showing that in a single run of the protocol, the amount of information obtained by the server is bounded, rather than introducing a composable security definition, which is nontrivial given the dependence of the leaked information on the responses of the server. 

Our results provoke a couple of questions. The first and most obvious is whether the flow ambiguity effect can be exploited to hide a universal set of computations even after the measurement angles have been communicated to the server. A second, and perhaps even more important question is whether this phenomenon can be used as a building block for verification of quantum computers by completely classical users.

\acknowledgments

The authors thank Rafael Alexander, Niel de Beaudrap, Michal Hajdu\v{s}ek, Elham Kashefi, Simon Perdrix and Carlos P\'{e}rez-Delgado for interesting discussions and valuable insights. T.F.D.\ thanks Yingkai Ouyang for carefully reading an early version of this manuscript and for his helpful comments. T.F.D.\ also thanks his brother Alessandro Demarie for reminding him why we strive to do science at our best. J.F.F.\ acknowledges support from the Air Force Office of Scientific Research under Grant No.~FA2386-15-1-4082. N.C.M. is supported by the Australian Research Council under Grant No.~DE120102204, by the Australian Research Council Centre of Excellence for Quantum Computation and Communication Technology (Project No.\ CE170100012), and by the U.S.\ Defense Advanced Research Projects Agency (DARPA) Quiness program under Grant No.\ W31P4Q-15-1-0004. This material is based on research funded by the Singapore National Research Foundation under NRF Award NRF-NRFF2013-01.

\bibliographystyle{bibstyleNCM_papers}
\bibliography{allrefs-Tom}

\begin{thebibliography}{10}

\bibitem{shor1999polynomial}
P.~W. Shor, ``Polynomial-time algorithms for prime factorization and discrete
  logarithms on a quantum computer,'' SIAM review {\bf 41}, 303 (1999).

\bibitem{lloyd1996universal}
S. Lloyd, ``Universal quantum simulators,'' Science {\bf 273}, 1073 (1996).

\bibitem{lloyd2013quantum}
S. Lloyd, M. Mohseni, and P. Rebentrost, ``Quantum algorithms for supervised
  and unsupervised machine learning,'' arXiv:1307.0411  (2013).

\bibitem{Dunjko2014}
V. Dunjko, J. Fitzsimons, C. Portmann, and R. Renner, {\em {Advances in
  Cryptology - ASIACRYPT 2014}} (Springer Berlin Heidelberg, 2014).

\bibitem{Childs2005b}
A. Childs, ``Secure assisted quantum computation,'' Quant. Inf. Comp. {\bf 5},
  456 (2005).

\bibitem{Arrighi2006}
P. Arrighi and L. Salvail, ``Blind quantum computation,'' Int. J. Quantum
  Information {\bf 4}, 883 (2006).

\bibitem{Broadbent2009}
A. Broadbent, J. Fitzsimons, and E. Kashefi, ``{Universal blind quantum
  computation},'' in {\em Proceedings of the 50th Annual IEEE Symposium on
  Foundations of Computer Science (FOCS)}, \ pp.\ 517--526  (2009).

\bibitem{Morimae2012}
T. Morimae and K. Fujii, ``Blind topological measurement-based quantum com-
  putation,'' Nat. Comm. {\bf 3}, 1036 (2012).

\bibitem{Sueki2013}
T. Sueki, T. Koshiba, and T. Morimae, ``Ancilla-driven universal blind quantum
  computation,'' Phys. Rev. A {\bf 87}, 060301 (2013).

\bibitem{Chien2013}
C.-H. Chien, R.~V. Meter, and S.-Y. Kuo, ``Fault-Tolerant Operations for
  Universal Blind Quantum Computation,'' J. Emerg. Technol. Comput. Syst. {\bf
  12}, 9:1 (2015).

\bibitem{Mantri2013}
A. Mantri, C. Perez-Delgado, and J. Fitzsimons, ``Optimal blind quantum
  computation,'' Phys. Rev. Lett. {\bf 111}, 230502 (2013).

\bibitem{Giovannetti2013}
V. Giovannetti, L. Maccone, T. Morimae, and T. Rudolph, ``Efficient universal
  blind quantum computation,'' Phys. Rev. Lett. {\bf 111}, 230501 (2013).

\bibitem{Perez2015}
C. Perez-Delgado and J. Fitzsimons, ``{Iterated Gate Teleportation and Blind
  Quantum Computation},'' Phys. Rev. Lett. {\bf 114}, 220502 (2015).

\bibitem{Morimae2012b}
T. Morimae, ``{Continuous-variable blind quantum computation},'' Phys. Rev.
  Lett. {\bf 109}, 230502 (2012).

\bibitem{dunjko2012blind}
V. Dunjko, E. Kashefi, and A. Leverrier, ``Blind quantum computing with weak
  coherent pulses,'' Phys. Rev. Lett. {\bf 108}, 200502 (2012).

\bibitem{fitzsimons2017unconditionally}
J.~F. Fitzsimons and E. Kashefi, ``Unconditionally verifiable blind quantum
  computation,'' Physical Review A {\bf 96}, 012303 (2017).

\bibitem{Hajdusek2015}
{M. Hajdu\v{s}ek and C. Perez-Delgado and J. Fitzsimons}, ``Device-Independent
  Verifiable Blind Quantum Computation,'' arXiv:1502.02563v1  (2015).

\bibitem{Gheorghiu2015}
A. Gheorghiu, E. Kashefi, and P. Wallden, ``Robustness and device independence
  of verifiable blind quantum computing,'' New J. Phys. {\bf 17}, 083040
  (2015).

\bibitem{morimae2013blind}
T. Morimae and K. Fujii, ``Blind quantum computation protocol in which Alice
  only makes measurements,'' Physical Review A {\bf 87}, 050301 (2013).

\bibitem{morimae2014verification}
T. Morimae, ``Verification for measurement-only blind quantum computing,''
  Phys. Rev. A {\bf 89}, 060302 (2014).

\bibitem{hayashi2015verifiable}
M. Hayashi and T. Morimae, ``Verifiable measurement-only blind quantum
  computing with stabilizer testing,'' Phys. Rev. Lett. {\bf 115}, 220502
  (2015).

\bibitem{hayashi2016self}
M. Hayashi and M. Hajdusek, ``Self-guaranteed measurement-based quantum
  computation,'' arXiv preprint arXiv:1603.02195  (2016).

\bibitem{Barz2012}
S. Barz, E. Kashefi, A. Broadbent, J. Fitzsimons, A. Zeilinger, and P. Walther,
  ``{Demonstration of blind quantum computing},'' Science {\bf 335}, 303
  (2012).

\bibitem{Barz2013}
S. Barz, J. Fitzsimons, E. Kashefi, and P. Walther, ``{Experimental
  verification of quantum computation},'' Nat. Phys. {\bf 9}, 727 (2013).

\bibitem{greganti2016demonstration}
C. Greganti, M.-C. Roehsner, S. Barz, T. Morimae, and P. Walther,
  ``Demonstration of measurement-only blind quantum computing,'' New. J. Phys.
  {\bf 18}, 013020 (2016).

\bibitem{aharonov2010proceedings}
D. Aharonov, M. Ben-Or, and E. Eban,  in {\em Proceedings of Innovations in
  Computer Science},   (Tsinghua University Press, 2010).

\bibitem{Broadbent2015}
A. Broadbent, ``How to verify a quantum computation,'' arXiv:1509.09180
  (2015).

\bibitem{Reichardt2013}
B. Reichardt, F. Unger, and U. Vazirani, ``Classical command of quantum
  systems,'' Nature {\bf 496}, 7446 (2013).

\bibitem{mckague2013interactive}
M. McKague, ``Interactive Proofs for $\mathsf{BQP}$ via Self-Tested Graph
  States,'' Theory of Computing {\bf 12}, 1 (2016).

\bibitem{fitzsimons2015post}
J.~F. Fitzsimons and M. Hajdu{\v{s}}ek, ``Post hoc verification of quantum
  computation,'' arXiv:1512.04375  (2015).

\bibitem{morimae2016post}
T. Morimae and J.~F. Fitzsimons, ``Post hoc verification with a single
  prover,'' arXiv:1603.06046  (2016).

\bibitem{Hauke2012}
P. Hauke, F.~M. Cucchietti, L. Tagliacozzo, I. Deutsch, and M. Lewenstein,
  ``Can one trust quantum simulators?,'' Reports on Progress in Physics {\bf
  75}, 082401 (2012).

\bibitem{Cramer2010}
M. Cramer {\it et~al.}, ``{Efficient quantum state tomography},'' Nat Comms
  {\bf 1}, 149 (2010).

\bibitem{Aolita2015}
L. Aolita, C. Gogolin, M. Kliesch, and J. Eisert, ``Reliable quantum
  certification of photonic state preparations,'' Nature Communications {\bf
  6}, 8498 EP  (2015).

\bibitem{Hangleiter2017}
D. Hangleiter, M. Kliesch, M. Schwarz, and J. Eisert, ``Direct certification of
  a class of quantum simulations,'' Quantum Science and Technology {\bf 2},
  015004 (2017).

\bibitem{Walschaers2016}
M. Walschaers {\it et~al.}, ``Statistical benchmark for BosonSampling,'' New
  Journal of Physics {\bf 18}, 032001 (2016).

\bibitem{Morimae2014}
T. Morimae and T. Koshiba, ``Impossibility of secure cloud quantum computing
  for classical client,'' arXiv:1407.1636  (2014).

\bibitem{Aaronson2017}
S. Aaronson, A. Cojocaru, A. Gheorghiu, and E. Kashefi, ``On the implausibility
  of classical client blind quantum computing,'' arXiv:1704.08482v1  (2017).

\bibitem{Dunjko2016}
V. Dunjko and E. Kashefi, ``Blind quantum computing with two almost identical
  states,'' arXiv:1604.01586  (2016).

\bibitem{Raussendorf2001}
R. Raussendorf and H.~J. Briegel, ``{A one-way quantum computer},'' Phys. Rev.
  Lett. {\bf 86}, 5188 (2001).

\bibitem{Gottesman1999}
D. Gottesman and I.~L. Chuang, ``{Demonstrating the viability of universal
  quantum computation using teleportation and single-qubit operations},''
  Nature (London) {\bf 402}, 390 (1999).

\bibitem{Zhou2000}
X. Zhou, D.~W. Leung, and I.~L. Chuang, ``{Methodology for quantum logic gate
  constructions},'' Phys. Rev. A {\bf 62}, 052316 (2000).

\bibitem{Browne2007}
D. Browne, E. Kashefi, M. Mhalla, and S. Perdrix, ``Generalized flow and
  determinism in measurement-based quantum computation,'' New J. Phys. {\bf 9},
  250 (2007).

\bibitem{Markham2013}
D. Markham and E. Kashefi, ``Entanglement, Flow and Classical Simulatability in
  Measurement Based Quantum Computation,'' arXiv:1311.3610  (2013).

\bibitem{Nielsen2000}
M. Nielsen and I. Chuang, {\em Quantum Computation and Quantum Information}
  (Cambridge University Press, 2000).

\bibitem{Danos2006}
V. Danos and E. Kashefi, ``Determinism in the one-way model,'' Phys. Rev. A
  {\bf 74}, 052310 (2006).

\bibitem{Danos2007}
V. Danos, E. Kashefi, and P. Panangaden, ``The measurement calculus,'' Journal
  of ACM {\bf 54}, 8 (2007).

\bibitem{Raussendorf2003}
R. Raussendorf, D.~E. Browne, and H.~J. Briegel, ``{Measurement-based quantum
  computation on cluster states},'' Phys. Rev. A {\bf 68}, 022312 (2003).

\bibitem{cover2012elements}
T.~M. Cover and J.~A. Thomas, {\em Elements of information theory} (John Wiley
  \& Sons, 2012).

\bibitem{gottesman1998heisenberg}
D. Gottesman, ``The Heisenberg representation of quantum computers,''
  arXiv:9807006  (1998).

\bibitem{mhalla2011graph}
M. Mhalla, M. Murao, S. Perdrix, M. Someya, and P.~S. Turner, ``Which graph
  states are useful for quantum information processing?,'' in {\em Conference
  on Quantum Computation, Communication, and Cryptography}, \ pp.\ 174--187
  (2011).

\bibitem{Ambainis1999}
A. Ambainis, A. Nayak, A. Ta-Shma, and U. Vazirani, ``Dense quantum coding and
  a lower bound for 1-way quantum automata,'' Proceedings of the thirty-first
  annual ACM symposium on Theory of Computing \ pp.\ 376--383 (1999).

\bibitem{Nayak1999}
A. Nayak, ``Optimal lower bounds for quantum automata and random access
  codes,'' Foundations of Computer Science \ pp.\ 369,376 (1999).

\bibitem{mantri2017universality}
A. Mantri, T.~F. Demarie, and J.~F. Fitzsimons, ``Universality of quantum
  computation with cluster states and (X, Y)-plane measurements,'' Scientific
  Reports {\bf 7}, 42861 (2017).

\end{thebibliography}

\begin{widetext}
%\newpage
\appendix
\section{Full joint probability for the protocol and conditional entropy bound}
\label{app:jointprob}

\begin{lemma}
\label{lem:condent}
$H(\vec B', \vec A' | \vec A, \vec F) \geq N$ regardless of Bob's strategy.
\end{lemma}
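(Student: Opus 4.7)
The plan is to reduce this entropic lemma to a pointwise probability bound. Specifically, I aim to show that for every quadruple $(\vec b', \vec\alpha', \vec\alpha, \vec f)$ and every (possibly adaptive, possibly quantum) strategy of Bob,
\[
\Pr(\vec b', \vec\alpha' \mid \vec\alpha, \vec f) \leq 2^{-N}.
\]
Once this is in hand, $H(\vec B', \vec A' \mid \vec A, \vec F)$ is an average of $-\log_2$ of such quantities, each of which is therefore at least $N$, and the lemma follows.

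To establish the pointwise bound, I would condition on Alice's one-time pad $\vec r$, which by construction is uniform on $\Z_2^N$ and independent of $(\vec A, \vec F)$. The law of total probability gives
\[
\Pr(\vec b', \vec\alpha' \mid \vec\alpha, \vec f) = \sum_{\vec r \in \Z_2^N} 2^{-N}\, \Pr(\vec b', \vec\alpha' \mid \vec\alpha, \vec f, \vec r).
\]
The summand factors into Bob's probability of producing $\vec b'$ having seen the transcript $\vec\alpha'$ -- a quantity at most one, and independent of $\vec r$ once $(\vec b', \vec\alpha')$ are fixed -- times an indicator enforcing that $\vec\alpha'$ is consistent with Alice's update rule \eqref{eq:alphap} under the chosen $\vec r$. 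The bound therefore reduces to showing that, for every fixed transcript, at most one value of $\vec r$ makes this indicator nonzero.

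This uniqueness is an inductive property of Alice's deterministic angle update. For $i = 1$, $s^x_1 = s^z_1 = 0$, so $\alpha'_1 \equiv \alpha_1 + r_1 \pi \pmod{2\pi}$ and $r_1$ is uniquely recovered from $(\alpha_1, \alpha'_1)$. Inductively, once $r_1, \dotsc, r_{i-1}$ are fixed by the earlier transcript, $b_{<i} = b'_{<i} \oplus r_{<i}$ is fixed, and together with $\vec f$ this determines $s^x_i$ and $s^z_i$ via the g-flow rules. Inverting Eq.~\eqref{eq:alphap} then yields a unique $r_i \in \Z_2$ (or none, if the transcript is inconsistent with the protocol, in which case that summand is zero). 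Thus at most one $\vec r^{\ast}$ survives the sum, immediately giving the desired $2^{-N}$ bound.

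The main conceptual point is that this injective decoding is entirely a property of Alice's side of the protocol: Bob's responses $\vec b'$ enter only as fixed data, so his strategy -- classical, quantum, honest, or adversarial -- plays no role in the inversion. The $N$ bits of entropy of the one-time pad are thereby inherited in full by the transcript, which is exactly what the lemma asserts. The only subtleties that remain are routine: verifying that the structure of $\mathcal{A}$ in Eq.~\eqref{eq:domainangles} renders the inversion of Eq.~\eqref{eq:alphap} well-defined, and confirming that transcripts inconsistent with every $\vec r$ contribute zero to the sum and so trivially satisfy the bound.
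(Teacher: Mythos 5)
Your proposal is correct and follows essentially the same route as the paper: it bounds $\Pr(\vec b', \vec\alpha' \mid \vec\alpha, \vec f) \leq 2^{-N}$ by summing over the uniform pad $\vec r$, factoring out Bob's response probabilities (which are at most one and independent of $\vec r$ given the transcript), and using that at most one $\vec r$ is consistent with any fixed transcript, then converting the pointwise bound into the entropy bound. The only cosmetic difference is that you recover $\vec r$ by forward induction on the qubit index, whereas the paper collapses the sums over $r_N, r_{N-1}, \dotsc$ backwards from the explicitly constructed joint distribution; the key uniqueness fact about inverting Eq.~\eqref{eq:alphap} is identical.
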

\begin{proof}
We construct the full joint probability for all of the variables in Protocol~\ref{prot:CDBQCsh} and use it to explicitly derive the desired result. Direct dependencies in the joint probability will be limited by causality and the assumptions that Alice's and Bob's laboratories are secure and free of each others' espionage. These limitations are as follows: 
\begin{itemize}
\item The flow bits~$\vec F$ and ideal measurement angles~$\vec A$ directly depend on no other variables. They are inputs to the problem chosen by Alice and can be correlated.
\item Each $\pi$-shift bit~$R_j$ in~$\vec R$ is chosen by flipping a fair coin; thus, it directly depends on no other variables.
\item Alice assigns $A'_j$ based directly on the current~$A_j$, all flow bits~$\vec F$, the current~$R_j$, and any prior decoded bits~$\vec B_{<j}$.
\item Each decoded bit~$B_j$ directly depends only on $R_j$ (the $\pi$-shift bit) and the bit~$B'_j$ received from Bob.
\item Each bit~$B'_j$ that Bob returns to Alice directly depends only on the information Bob has on hand at the time (specifically, $\vec B'_{<j}$ and $\vec A'_{\leq j}$), as well as any (classical or quantum) stochastic strategy he wishes to employ.
\end{itemize}
With these direct-dependency limitations, we can immediately write down the form of the full joint probability for the entire protocol:
\begin{align}
	\Pr(\vec b', \vec \alpha', \vec \alpha, \vec f, \vec b, \vec r)
	&= \Pr(\vec \alpha, \vec f) \prod_{j=1}^N \Pr(b'_j | \vec b'_{<j}, \vec \alpha'_{\leq j}) \Pr(\alpha'_j | \alpha_j, \vec f, \vec b_{<j}, r_j) \Pr(b_j | b'_j, r_j) \Pr(r_j).
\end{align}
Furthermore, we can explicitly write several of these probabilities:
\begin{align}
	\Pr(b_j | b'_j, r_j)
	&= \delta^{b_j}_{b'_j \oplus r_j}
	,
\\
	\Pr(\alpha'_j | \alpha_j, \vec f, \vec b_{<j}, r_j)
	&= \delta^{\alpha'_j}_{G_j(\alpha_j, \vec f, \vec b_{<j}, r_j)}
	,
\\
	\Pr(r_j)
	&= \frac 1 2,
\end{align}
with the deterministic function
\begin{align}
	G_j(\alpha_j, \vec f, \vec b_{<j}, r_j)
	&\coloneqq (-1)^{s^x_j(\vec f, \vec b_{<j})} \alpha_j + \pi s^z_j(\vec f, \vec b_{<j}) + \pi r_j \mod 2\pi
\end{align}
obtained from Eq.~\eqref{eq:alphap}. These hold for all~$j$. At this point, we have the most general form of the full joint probability consistent with the protocol:
\begin{align}
	\Pr(\vec b', \vec \alpha', \vec \alpha, \vec f, \vec b, \vec r)
	&= \frac { \Pr(\vec \alpha, \vec f)} {2^N} \delta^{\vec b}_{\vec b' \oplus \vec r}
		\prod_{j=1}^N \Pr(b'_j | \vec b'_{<j}, \vec \alpha'_{\leq j}) \delta^{\alpha'_j}_{G_j(\alpha_j, \vec f, \vec b_{<j}, r_j)}
	,
\end{align}
where we have left Bob's strategy arbitrary but consistent with the direct-dependency restrictions given above. Marginalizing over~$\vec B$ gives
\begin{align}
	\Pr(\vec b', \vec \alpha', \vec \alpha, \vec f, \vec r)
	&= \sum_{\vec b} \Pr(\vec b', \vec \alpha', \vec \alpha, \vec f, \vec b, \vec r)
\\
	&= \frac { \Pr(\vec \alpha, \vec f)} {2^N}
		\prod_{j=1}^N \Pr(b'_j | \vec b'_{<j}, \vec \alpha'_{\leq j}) \delta^{\alpha'_j}_{G_j(\alpha_j, \vec f, \vec b'_{<j} \oplus \vec r_{<j}, r_j)}
	.
\end{align}
From this joint probability distribution we can compute
%\begin{widetext}
\begin{align}
	\Pr(\vec b', \vec \alpha'|  \vec \alpha, \vec f)
	&= \sum_{\vec r}
	\frac 
	{\Pr(\vec b', \vec \alpha', \vec \alpha, \vec f, \vec r)}
	{\Pr(\vec \alpha, \vec f)}
\\
	&= \frac {1} {2^N}
	\sum_{r_1} \dotsm \sum_{r_{N-2}}  \sum_{r_{N-1}} \sum_{r_N} 
	\prod_{j=1}^N \Pr(b'_j | \vec b'_{<j}, \vec \alpha'_{\leq j})
	\delta^{\alpha'_j}_{G_j(\alpha_j, \vec f, \vec b'_{<j} \oplus \vec r_{<j}, r_j)}
\\
	&= \frac {1} {2^N} 
	\left[
	\prod_{j=1}^N \Pr(b'_j | \vec b'_{<j}, \vec \alpha'_{\leq j})
	\right]
	\sum_{r_1} \dotsm \sum_{r_{N-2}} \sum_{r_{N-1}}
	\left[
	\prod_{j=1}^{N-1}
	\delta^{\alpha'_j}_{G_j(\alpha_j, \vec f, \vec b'_{<j} \oplus \vec r_{<j}, r_j)}
	\right]
\nonumber
\\
	&\qquad \qquad \qquad \times
	\underbrace{
	\left(
	\delta^{\alpha'_N}_{G_j(\alpha_N, \vec f, \vec b'_{<N} \oplus \vec r_{<N}, 0)}
	+
	\delta^{\alpha'_N}_{G_j(\alpha_N, \vec f, \vec b'_{<N} \oplus \vec r_{<N}, 1)}
	\right)
	}_{\text{at most one term is nonzero}}
\\
	&\leq \frac {1} {2^N}
	\left[
	\prod_{j=1}^N \Pr(b'_j | \vec b'_{<j}, \vec \alpha'_{\leq j})
	\right]
	\sum_{r_1} \dotsm \sum_{r_{N-2}} \sum_{r_{N-1}}
	\prod_{j=1}^{N-1}
	\delta^{\alpha'_j}_{G_j(\alpha_j, \vec f, \vec b'_{<j} \oplus \vec r_{<j}, r_j)}
\\
	&\leq \frac {1} {2^N}
	\left[
	\prod_{j=1}^N \Pr(b'_j | \vec b'_{<j}, \vec \alpha'_{\leq j})
	\right]
	\sum_{r_1} \dotsm \sum_{r_{N-2}}
	\prod_{j=1}^{N-2}
	\delta^{\alpha'_j}_{G_j(\alpha_j, \vec f, \vec b'_{<j} \oplus \vec r_{<j}, r_j)}
\\
	&\;\;\vdots \nonumber
\\
	&\leq \frac {1} {2^N}
	\left[
	\prod_{j=1}^N \Pr(b'_j | \vec b'_{<j}, \vec \alpha'_{\leq j})
	\right]
	\sum_{r_1}
	\delta^{\alpha'_1}_{G_1(\alpha_1, \vec f, r_1)}
\\
	&\leq \frac {1} {2^N}
	\prod_{j=1}^N \Pr(b'_j | \vec b'_{<j}, \vec \alpha'_{\leq j})
\\
	&\leq \frac {1} {2^N}
\end{align}
In the above, we have repeatedly used the fact that $G_j$ has at most one~$r_j$ that makes it equal to~$\alpha'_j$ for any given $(\alpha_j, \vec f, \vec b_{<j})$. Therefore, substituting the above bound into the conditional entropy formula gives
\begin{align}
	H(\vec B', \vec A' | \vec A, \vec F)
	&= \sum_{\vec \alpha, \vec f} \Pr(\vec \alpha, \vec f)
	H(\vec B', \vec A' | \vec A = \vec \alpha, \vec F = \vec f)
	\geq \sum_{\vec \alpha, \vec f} \Pr(\vec \alpha, \vec f)
	N
	= N,
\end{align}
which was to be proven.
\end{proof}

\end{widetext}

\end{document}